\newcommand{\bea}{\begin{eqnarray}}
\newcommand{\eea}{\end{eqnarray}}
\def\be{\begin{equation}}
\def\ee{\end{equation}}
\newcommand{\ie}{{\it i.e.~}}
\newcommand{\bpm}{\begin{pmatrix}}
\newcommand{\epm}{\end{pmatrix}}
\newcommand{\bmm}{\begin{matrix}}
\newcommand{\emm}{\end{matrix}}
\newcommand {\emptycomment}[1]{}
\newcommand{\nn}{{\nonumber}}
\newcommand{\Sec}[1]{Sec.~\ref{#1}}
\newcommand{\Fig}[1]{Fig.~\ref{#1}}
\newtheorem{theorem}{Theorem}[section]
\newtheorem{lemma}[theorem]{Lemma}
\newtheorem{assumption}[theorem]{Assumption}
\newtheorem{proposition}[theorem]{Proposition}
\newtheorem{convention}[theorem]{Convention}
\newcommand{\Gr}{\mathrm{Gr}}
\newcommand{\ceil}[1]{\left\lceil #1 \right\rceil}
\newcommand{\floor}[1]{\left\lfloor #1 \right\rfloor}
\begin{document}
\sloppy
\title{A matrix solution to any polygon equation}

\author{Zheyan Wan}

\address{Beijing Institute of Mathematical Sciences and Applications, Beijing 101408, China \\
wanzheyan@bimsa.cn}

\maketitle

\begin{abstract}
In this article, we construct matrices associated to Pachner $\frac{n-1}{2}$-$\frac{n-1}{2}$ moves for odd $n$ and matrices associated to Pachner $(\frac{n}{2}-1)$-$\frac{n}{2}$ moves for even $n$. The entries of these matrices are rational functions of formal variables in a field. We prove that these matrices satisfy the $n$-gon equation for any $n$.

\end{abstract}

\section{Introduction}

In this article, an $n$-gon means a simplicial complex with $n$ vertices. Pachner moves \cite{Pachner,Lickorish} are operations on a simplicial complex that changes its triangulation to another one. For example, a Pachner 2-2 move is a flip of diagonals in a quadrilateral, a Pachner 2-3 move is an operation that replaces two tetrahedra 1245, 2345 with three tetrahedra 1234, 1235, 1345, see \Fig{fig:Pachner}, and so on.
\begin{figure}[h]
\centering\includegraphics[width = 0.9\textwidth]{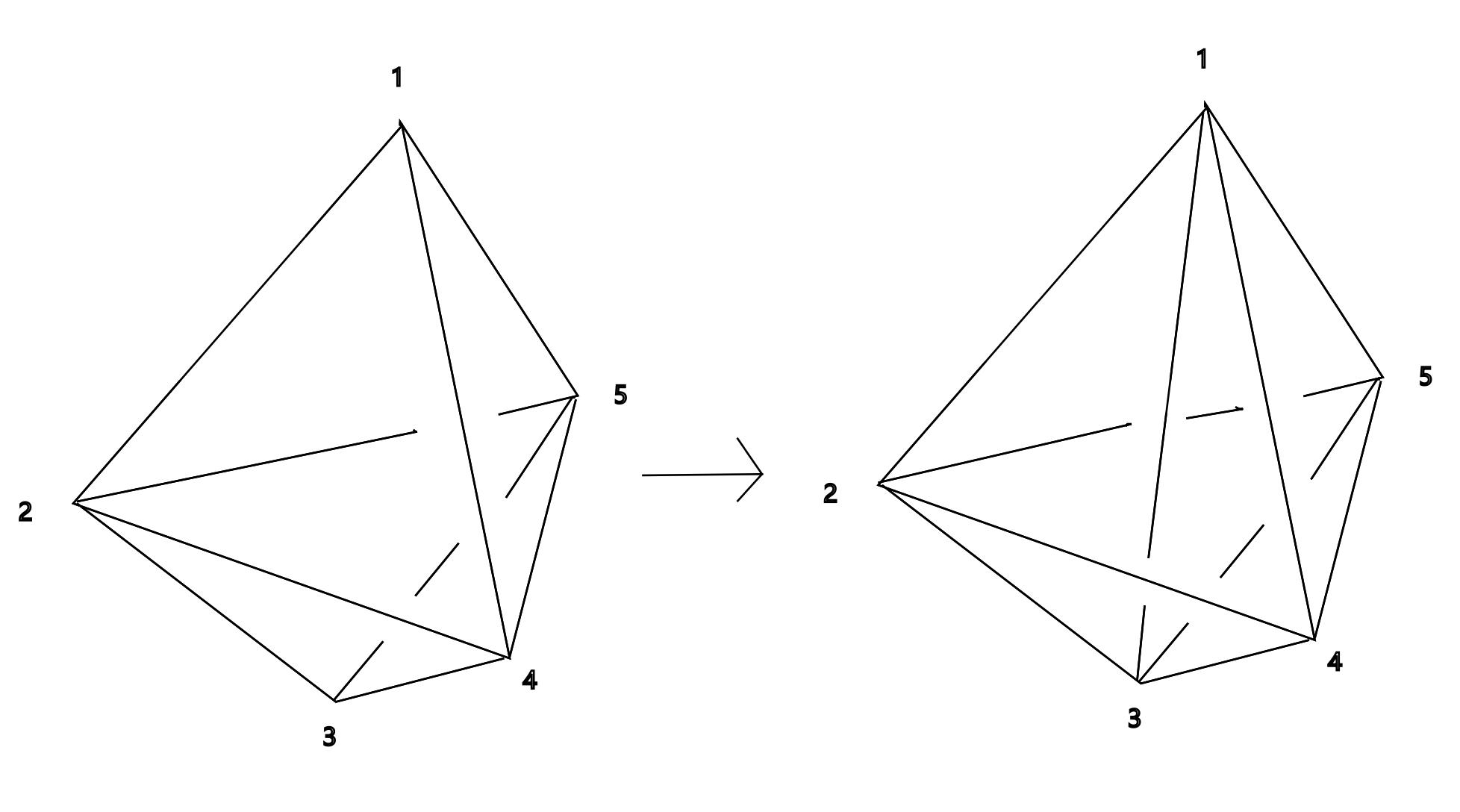}
\caption{Pachner 2-3 move.}\label{fig:Pachner}
\end{figure}

The polygon equation is closely related to the Pachner moves. 
More precisely, the $n$-gon equation is an equation between two products of Pachner moves. If $n$ is odd, then the left-hand side of the $n$-gon equation is the product of $\frac{n-1}{2}$ Pachner $\frac{n-1}{2}$-$\frac{n-1}{2}$ moves, while the right-hand side of the $n$-gon equation is the product of $\frac{n+1}{2}$ Pachner $\frac{n-1}{2}$-$\frac{n-1}{2}$ moves.
If $n$ is even, then the left-hand side of the $n$-gon equation is the product of $\frac{n}{2}$ Pachner $(\frac{n}{2}-1)$-$\frac{n}{2}$ moves, while the right-hand side of the $n$-gon equation is the product of $\frac{n}{2}$ Pachner $(\frac{n}{2}-1)$-$\frac{n}{2}$ moves. For example, the pentagon equation is an equation between the product of 2 Pachner 2-2 moves and the product of 3 Pachner 2-2 moves, the hexagon equation is an equation between the product of 3 Pachner 2-3 moves and the product of 3 Pachner 2-3 moves, and so on.

The polygon equation itself can also be regarded as a Pachner move. More precisely, if $n$ is odd, then the $n$-gon equation can be regarded as a $\frac{n-1}{2}$-$\frac{n+1}{2}$ move (that replaces the left-hand side with the right-hand side, both sides consist of a sequence of Pachner $\frac{n-1}{2}$-$\frac{n-1}{2}$ moves and each Pachner $\frac{n-1}{2}$-$\frac{n-1}{2}$ move in both sides corresponds to a face of the $n$-gon (regarded as a simplicial complex)). If $n$ is even, then the $n$-gon equation can be regarded as a $\frac{n}{2}$-$\frac{n}{2}$ move (that replaces the left-hand side with the right-hand side, both sides consist of a sequence of Pachner $(\frac{n}{2}-1)$-$\frac{n}{2}$ moves and each Pachner $(\frac{n}{2}-1)$-$\frac{n}{2}$ move in both sides corresponds to a face of the $n$-gon (regarded as a simplicial complex)).

In \cite{DK21}, a matrix solution for any odd polygon (odd-gon) equation was constructed. Namely, for odd $n$, a $\frac{n-1}{2}\times\frac{n-1}{2}$ matrix (whose entries are rational functions of formal variables in a field) was constructed for each Pachner $\frac{n-1}{2}$-$\frac{n-1}{2}$ move, and it was proved in \cite{DK21} that these matrices satisfy the odd-gon equation. In the present article, we do similar things for any even polygon (even-gon) equation. Namely, for even $n$, we construct a $\frac{n}{2}\times(\frac{n}{2}-1)$ matrix (whose entries are rational functions of formal variables in a field) for each Pachner $(\frac{n}{2}-1)$-$\frac{n}{2}$ move, and we prove that these matrices satisfy the even-gon equation. Our method also applies to any odd-gon equation and our solution for any odd-gon equation is exactly a partial case of the one in \cite{DK21}.
A similar but different solution for any odd-gon equation was given in \cite{K22b}. 
A different solution to the hexagon equation was given in \cite{KS17}.
These solutions to the polygon equations are useful to study polygon relations appearing from simplicial cocycles and invariants of manifolds, like in \cite{K21,K23}.

This article is structured as follows: In \Sec{sec:polygon}, we formulate the polygon equation.
In \Sec{sec:main}, we formulate our main results (a matrix solution for any polygon equation). In \Sec{sec:proof}, we prove our main results. 
In \Sec{sec:pentagon}, we give an example: solving the pentagon equation. In \Sec{sec:hexagon}, we give an example: solving the hexagon equation. In \Sec{sec:heptagon}, we give an example: solving the heptagon equation.

\subsection{Acknowledgements}
The author is grateful to I.G. Korepanov for the helpful discussion and comments.

\section{The polygon equation}\label{sec:polygon}
In this section, we formulate the polygon equation.

The $n$-gon equation involves two sequences of Pachner moves. Each sequence is a path from the initial triangulation to the final triangulation of the $n$-gon. The number of $(n-3)$-simplices in each step of the two paths does not change for odd $n$ and increases by 1 for even $n$. Each $(n-3)$-simplex has $n-2$ vertices, it corresponds to a pair $(i,j)$ where $1\le i<j\le n$ and $i,j$ are not the vertices of the $(n-3)$-simplex.

The initial triangulation of the $n$-gon consists of $(n-3)$-simplices which correspond to the pairs
\bea\label{eq:pairs}
(i,j)=(n+1-2k,n+2-2l)\text{ for }1\le l\le k\le \floor{\frac{n-1}{2}}
\eea
where $\floor{x}$ means the largest integer less than or equal to $x$. There are $\dfrac{\floor{\frac{n-1}{2}}(\floor{\frac{n-1}{2}}+1)}{2}$ $(n-3)$-simplices in the initial triangulation of the $n$-gon. For example, the initial triangulation of the pentagon consists of triangles 123, 134, and 145 which correspond to the pairs (4,5), (2,5), and (2,3). The initial triangulation of the hexagon consists of tetrahedra 1234, 1245, and 1256 which correspond to the pairs (5,6), (3,6), and (3,4). See \Sec{sec:pentagon} and \Sec{sec:hexagon}.

We can do two Pachner moves for the initial triangulation of the $n$-gon, leading to two sequences of Pachner moves on the left-hand side and right-hand side of the $n$-gon equation.
Note that there are $\floor{\frac{n-1}{2}}$ pairs in \eqref{eq:pairs} with the smallest $i$ and $\floor{\frac{n-1}{2}}$ pairs in \eqref{eq:pairs} with the largest $j$. The two Pachner moves for the initial triangulation of the $n$-gon act on the $(n-3)$-simplices corresponding to the $\floor{\frac{n-1}{2}}$ pairs in \eqref{eq:pairs} with the smallest $i$ and $\floor{\frac{n-1}{2}}$ pairs in \eqref{eq:pairs} with the largest $j$ respectively.
We denote these two sets of pairs by 
\bea
(i_{\min}, j_1),(i_{\min}, j_2),\dots,(i_{\min}, j_{\floor{\frac{n-1}{2}}})
\eea
and
\bea
(i_1,j_{\max}),(i_2,j_{\max}),\dots,(i_{\floor{\frac{n-1}{2}}},j_{\max})
\eea
respectively.
Each Pachner move in the $n$-gon equation only involves $n-1$ vertices, we denote the remaining vertex by $q$. Let $\{1,2,\dots,n\}\setminus\{q\}=\{a_1,a_2,\dots,a_{n-1}\}$. 
We divide the set $\{a_1,a_2,\dots,a_{n-1}\}$ into two parts:
\bea
\{a_1,a_2,\dots,a_{n-1}\}=\{b_1,b_2,\dots,b_{\floor{\frac{n-1}{2}}}\}\cup\{c_1,c_2,\dots,c_{\ceil{\frac{n-1}{2}}}\}
\eea
where $\ceil{x}$ means the smallest integer greater than or equal to $x$.

We denote the Pachner move that replaces the $(n-3)$-simplices corresponding to the pairs $(b_1,q)$, $(b_2,q)$, $\dots$, $(b_{\floor{\frac{n-1}{2}}},q)$ with the $(n-3)$-simplices corresponding to the pairs $(c_1,q)$, $(c_2,q)$, $\dots$, $(c_{\ceil{\frac{n-1}{2}}},q)$ by $d_{b_1b_2\cdots b_{\floor{\frac{n-1}{2}}},c_1c_2\cdots c_{\ceil{\frac{n-1}{2}}}}$ or $d_{b_1b_2\cdots b_{\floor{\frac{n-1}{2}}}}^{(q)}$ for short.
Then the two Pachner moves for the initial triangulation of the $n$-gon are $d_{j_1j_2\cdots j_{\floor{\frac{n-1}{2}}}}^{(i_{\min})}$ and $d_{i_1i_2\cdots i_{\floor{\frac{n-1}{2}}}}^{(j_{\max})}$ respectively. 

\begin{convention}
In this article, we list the elements of all sets always in increasing order, if not explicitly stated otherwise.
\end{convention}

For odd $n$, $i_{\min}=2$ and $\{j_1,j_2,\dots,j_{\frac{n-1}{2}}\}=\{3,5,\dots,n\}$, $j_{\max}=n$ and $\{i_1,i_2,\dots,i_{\frac{n-1}{2}}\}=\{2,4,\dots,n-1\}$.
The $n$-gon equation is 
\bea\label{eq:n-gon-odd}
d_{35\cdots n}^{(2)}d_{257\cdots n}^{(4)}\cdots d_{24\cdots(n-3)n}^{(n-1)}=d_{24\cdots(n-1)}^{(n)}d_{24\cdots(n-3)n}^{(n-2)}\cdots d_{35\cdots n}^{(1)}.
\eea
For even $n$, $i_{\min}=3$ and $\{j_1,j_2,\dots,j_{\frac{n}{2}-1}\}=\{4,6,\dots,n\}$, $j_{\max}=n$ and $\{i_1,i_2,\dots,i_{\frac{n}{2}-1}\}=\{3,5,\dots,n-1\}$.
The $n$-gon equation is 
\bea\label{eq:n-gon-even}
d_{46\cdots n}^{(3)}d_{368\cdots n}^{(5)}\cdots d_{35\cdots(n-3)n}^{(n-1)}d_{35\cdots(n-1)}^{(1)}=d_{35\cdots(n-1)}^{(n)}d_{35\cdots(n-3)n}^{(n-2)}\cdots d_{46\cdots n}^{(2)}.
\eea

For odd $n$, the final triangulation of the $n$-gon consists of $(n-3)$-simplices which correspond to the pairs
\bea
(i,j)=(n-2k,n+1-2l)\text{ for }1\le l\le k\le {\frac{n-1}{2}}.
\eea
There are $\dfrac{{(\frac{n-1}{2})}({\frac{n+1}{2}})}{2}$ $(n-3)$-simplices in the final triangulation of the $n$-gon.
For even $n$, the final triangulation of the $n$-gon consists of $(n-3)$-simplices which correspond to the pairs
\bea
(i,j)=(n-2k,n+1-2l)\text{ for }1\le l\le k\le {\frac{n}{2}}-1\text{ and }(1,n+2-2l)\text{ for }1\le l\le \frac{n}{2}. 
\eea
There are $\dfrac{{\frac{n}{2}}({\frac{n}{2}+1})}{2}$ $(n-3)$-simplices in the initial triangulation of the $n$-gon.
For example, the final triangulation of the pentagon consists of triangles 125, 235, and 345 which correspond to the pairs (3,4), (1,4), and (1,2). The final triangulation of the hexagon consists of tetrahedra 1236, 1346, 1456, 2345, 2356, and 3456 which correspond to the pairs (4,5), (2,5), (2,3), (1,6), (1,4), and (1,2). See \Sec{sec:pentagon} and \Sec{sec:hexagon}.

The $n$-gon equation is shown in \Fig{fig:n-gon}.
\begin{figure}[h]
\centering\includegraphics[width = 0.9\textwidth]{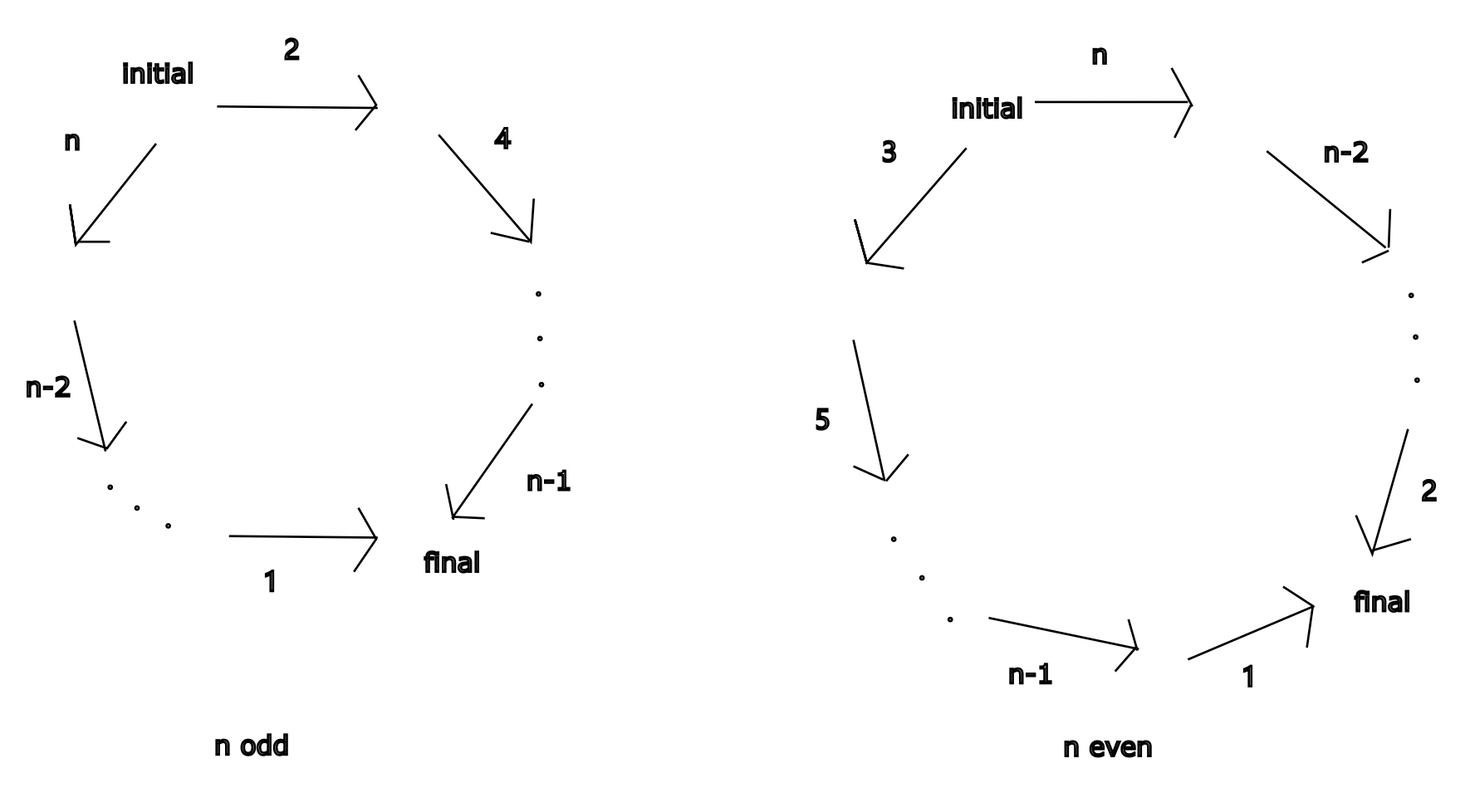}
\caption{The $n$-gon equation involves two sequences of Pachner moves from the initial triangulation to the final triangulation of the $n$-gon. Each arrow with label $q$ denotes the Pachner move which does not involve the vertex $q$.}\label{fig:n-gon}
\end{figure}

\section{Main results}\label{sec:main}

In this section, we formulate our main results (a matrix solution for any polygon equation), and we will give some examples later. 

\subsection{Construction of matrices associated to Pachner moves}\label{sec:matrices}

 For any integer $n\ge5$, let $\zeta_i$ ($i=1,2,\dots,n$) be pairwise distinct variables in a field $F$. We need to assume that the characteristic of $F$ is zero to prove our main results.
For $1\le k\le n$ and $1\le i_1<i_2<\cdots<i_{k}\le n$, we denote
\bea\label{eq:Vandermonde}
V(i_1,i_2,\dots,i_{k})=\begin{vmatrix}1 &1 & \cdots &1\\
\zeta_{i_1}&\zeta_{i_2}&\cdots&\zeta_{i_{k}}\\
\zeta_{i_1}^2&\zeta_{i_2}^2&\cdots&\zeta_{i_{k}}^2\\
\vdots&\vdots&\ddots&\vdots\\
\zeta_{i_1}^{k-1}&\zeta_{i_2}^{k-1}&\cdots&\zeta_{i_{k}}^{k-1}\\
\end{vmatrix} 
\eea
as the Vandermonde determinant.

If $n$ is odd, we associate a $\frac{n-1}{2}\times\frac{n-1}{2}$ matrix $P_{a_2a_4a_6\cdots a_{n-1},a_1a_3a_5\cdots a_{n-2}}$ ($1\le a_1<a_3<\cdots<a_{n-2}\le n$, $1\le a_2<a_4<\cdots<a_{n-1}\le n$, $\{a_1,a_3,\dots,a_{n-2}\}\cap \{a_2,a_4,\dots,a_{n-1}\}=\emptyset$) to the Pachner $\frac{n-1}{2}$-$\frac{n-1}{2}$ move for the $(n-1)$-gon $a_1a_2a_3\cdots a_{n-1}$ (regarded as a simplicial complex) that replaces the $(n-3)$-simplices $a_1a_2a_3\cdots a_{n-2}$, 
$a_1a_2a_3\cdots a_{n-4}a_{n-2}a_{n-1}$, 
$\dots$, 
$a_1a_3a_4\cdots a_{n-2}a_{n-1}$ 
with the $(n-3)$-simplices $a_1a_2a_3\cdots a_{n-3}a_{n-1}$,
 $a_1a_2a_3\cdots a_{n-5}a_{n-3}a_{n-2}a_{n-1}$, $\dots$, 
 $a_2a_3a_4\cdots a_{n-1}$ as follows: the $(i,j)$-entry of the matrix $P_{a_2a_4a_6\cdots a_{n-1},a_1a_3a_5\cdots a_{n-2}}$ is
\bea\label{eq:P-odd}
(P_{a_2a_4a_6\cdots a_{n-1},a_1a_3a_5\cdots a_{n-2}})_i^j=(-1)^{j+\frac{n-1}{2}}\frac{V(a_{n-2i},a_2,a_4,\dots,\hat{a}_{n+1-2j},\dots,a_{n-1})}{V(a_2,a_4,a_6,\dots,a_{n-1})}
\eea
where $\hat{a}_k$ means that $a_k$ is omitted.

If $n$ is even, we associate a $\frac{n}{2}\times(\frac{n}{2}-1)$ matrix $P_{a_1a_3a_5\cdots a_{n-3},a_2a_4a_6\cdots a_{n-2}a_{n-1}}$ ($1\le a_1<a_3<\cdots<a_{n-3}\le n$, $1\le a_2<a_4<\cdots<a_{n-2}<a_{n-1}\le n$, $\{a_1,a_3,\dots,a_{n-3}\}\cap \{a_2,a_4,\dots,a_{n-2},a_{n-1}\}=\emptyset$) to the Pachner $(\frac{n}{2}-1)$-$\frac{n}{2}$ move for the $(n-1)$-gon $a_1a_2a_3\cdots a_{n-1}$ (regarded as a simplicial complex) that replaces the $(n-3)$-simplices $a_1a_2a_3\cdots a_{n-4}a_{n-2}a_{n-1}$, 
$a_1a_2a_3\cdots a_{n-6}a_{n-4}a_{n-3}a_{n-2}a_{n-1}$, 
$\dots$, 
$a_2a_3a_4\cdots a_{n-2}a_{n-1}$ 
with the $(n-3)$-simplices $a_1a_2a_3\cdots a_{n-2}$, 
$a_1a_2a_3\cdots a_{n-3}a_{n-1}$, 
$\dots$, 
$a_1a_3a_4\cdots a_{n-1}$ as follows: the $(i,j)$-entry of the matrix $P_{a_1a_3a_5\cdots a_{n-3},a_2a_4a_6\cdots a_{n-2}a_{n-1}}$ is
\bea\label{eq:P-even}
(P_{a_1a_3a_5\cdots a_{n-3},a_2a_4a_6\cdots a_{n-2}a_{n-1}})_i^j=\left\{\begin{array}{ll}(-1)^{j+\frac{n}{2}-1}\frac{V(a_{n+2-2i},a_1,a_3,\dots,\hat{a}_{n-1-2j},\dots,a_{n-3})}{V(a_1,a_3,a_5,\dots,a_{n-3})}&\text{if }i>1\\
(-1)^{j+\frac{n}{2}-1}\frac{V(a_{n-1},a_1,a_3,\dots,\hat{a}_{n-1-2j},\dots,a_{n-3})}{V(a_1,a_3,a_5,\dots,a_{n-3})}&\text{if }i=1
\end{array}\right.
\eea
where $\hat{a}_k$ means that $a_k$ is omitted.

Note that the sum of the entries of each row of the matrix $P$ is 1.
This construction is obtained by induction on $n$ and inspired by \cite{KS13}, see later sections for examples.

We can extend the matrices we constructed for Pachner moves above by adding a row and a column so that the entry is 1 at the intersection and 0 elsewhere corresponding to each fixed $(n-3)$-simplex in the Pachner moves.

\begin{theorem}\label{main}
The extended matrices satisfy the $n$-gon equation \eqref{eq:n-gon-odd} or \eqref{eq:n-gon-even}.
\end{theorem}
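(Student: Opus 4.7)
The plan is to recognise the matrix entries as Lagrange interpolation coefficients and then show that both sides of the polygon equation compute the same change-of-basis on a fixed polynomial space.

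As a first step, I would reformulate the matrix entries via the standard identity
\begin{equation*}
\frac{V(c, b_1, \ldots, \hat{b}_k, \ldots, b_m)}{V(b_1, \ldots, b_m)} = (-1)^{k-1}\,\ell_k(\zeta_c),
\end{equation*}
where $\ell_k$ is the $k$-th Lagrange basis polynomial at the points $\zeta_{b_1}, \ldots, \zeta_{b_m}$. Combining this with the sign $(-1)^{j+(n-1)/2}$ in \eqref{eq:P-odd} and the permutation sending $j$ to the sorted position of the omitted element $a_{n+1-2j}$ inside $\{a_2,a_4,\ldots,a_{n-1}\}$, all signs collapse (using that $n-1$ is even for odd $n$), and the odd-$n$ entry simplifies to $\ell_{k(j)}(\zeta_{a_{n-2i}})$. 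Thus $P_{B,A}$ is literally the matrix that converts a polynomial of degree less than $(n-1)/2$, written in the Lagrange basis at $\{\zeta_a : a \in B\}$, into the tuple of its values at $\{\zeta_a : a \in A\}$. A parallel identification, with the space of polynomials of degree less than $n/2$ and the vertex $a_{n-1}$ playing a distinguished role, handles even $n$.

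With this picture in hand, I would attach to each triangulation of the $n$-gon a fixed ambient polynomial space together with a basis indexed by its $(n-3)$-simplices (a Lagrange basis at the associated $\zeta$'s). Each Pachner move matrix is then precisely a change-of-basis inside this space, and composing Pachner moves composes these changes of basis. Both products in \eqref{eq:n-gon-odd} and \eqref{eq:n-gon-even} connect the same initial and final triangulations, so both must equal the same total change-of-basis matrix, which proves the $n$-gon equation. The observation in the paper that each row of $P$ sums to $1$ is a sanity check of this interpretation: it corresponds to the Lagrange decomposition of the constant polynomial $1$.

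To make the above rigorous, I would proceed by induction on $n$, with the pentagon and hexagon cases (treated later in the paper) serving as base cases; each inductive step reduces an internal segment of either product to a smaller polygon equation by exhibiting a common intermediate triangulation. The main obstacle I anticipate is combinatorial bookkeeping: correctly matching each $(n-3)$-simplex in every intermediate triangulation with its Lagrange basis vector, verifying that the sign and index conventions agree with the explicit form of $P$ given in \eqref{eq:P-odd} and \eqref{eq:P-even}, and, in the even-$n$ case, coherently handling the rectangular $\frac{n}{2}\times(\frac{n}{2}-1)$ matrices as natural maps between polynomial spaces of adjacent degrees (reflecting that a Pachner $(\frac{n}{2}-1)$-$\frac{n}{2}$ move adds one simplex rather than preserving the count). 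The assumption that $F$ has characteristic zero is used throughout to guarantee that every Vandermonde denominator appearing in an intermediate Lagrange decomposition is non-zero.
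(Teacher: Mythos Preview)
Your Lagrange-interpolation reading of a single matrix $P$ is correct and is in fact the form in which the paper itself rewrites the entries when proving Proposition~\ref{matrix}. The real gap is the next sentence: you assert that one can ``attach to each triangulation a fixed ambient polynomial space together with a basis indexed by its $(n-3)$-simplices'' and that both sides of the equation are then two expressions for the same global change of basis. But the polynomial space you have identified has dimension $\lfloor (n-1)/2 \rfloor$, whereas a triangulation of the $n$-gon carries $N = \tfrac{1}{2}\lfloor (n-1)/2 \rfloor(\lfloor (n-1)/2 \rfloor + 1)$ simplices, and the extended matrices are $N \times N$ (or $(N{+}1)\times N$ in the even case); already for $n=5$ one has $N=3>2$. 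So there is no single space of degree-bounded polynomials on which the extended matrices act as changes of basis. Put differently, the Pachner move omitting vertex $q$ is a change of Lagrange nodes in one evaluation scheme, but the move omitting a different vertex $q'$ uses a \emph{different} evaluation scheme, and the simplices sitting in the identity block of one extended matrix are exactly the ones being transformed nontrivially by another. You have not explained how these local pictures glue into a single global object on which both products act; without that, the path-independence claim is unsubstantiated. The paper supplies precisely the missing construction: it assigns to each simplex $S$ an explicit vector $f^{(n)}_S \in F^n$ (lying in the orthogonal complement of the Vandermonde row space, not in a degree-bounded polynomial space), proves that each $P$ carries the input $f$-vectors to the output ones (Proposition~\ref{matrix}), and then proves that the $f$-vectors for the initial triangulation are linearly independent (Proposition~\ref{f-polygon}), which forces the two matrix products to coincide. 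Neither the construction of such vectors nor the linear-independence argument appears in your outline, and the inductive reduction you sketch at the end is not developed enough to replace them.

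A smaller point: your explanation of where characteristic zero enters is wrong. Non-vanishing of Vandermonde determinants requires only that the $\zeta_i$ be pairwise distinct, in any characteristic. In the paper, characteristic zero is used in the linear-independence proofs (Propositions~\ref{linear-independent} and~\ref{f-polygon}), to conclude from relations of the form $\sum_{m \ne m_0} \lambda_m = 0$ (for all $m_0$) that every $\lambda_m$ vanishes---an implication that fails when the number of terms is divisible by the characteristic.
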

\subsection{Comparison of our solution to the odd-gon equation with the one in \cite{DK21}}
Note that our solution to the odd-gon equation is exactly a partial case of the one in \cite{DK21}. To show this, we introduce Grassmannian and Pl{\" u}cker coordinates.
Let $F$ be a field, $F^{n}$ the row vector space of dimension $n$ over $F$, $\mathsf e_1,\ldots  ,\mathsf e_{n}$ the standard basis in $F^{n}$. 

We assume that $n$ is odd.
Let
\begin{equation}
\mathcal{M}= \begin{pmatrix}
 \alpha _{1,1} & \alpha _{1,2} & \cdots  & \alpha _{1,n} \\
 \alpha _{2,1} & \alpha _{2,2} & \cdots  & \alpha _{2,n} \\
 \vdots  & \vdots  & \ddots & \vdots  \\
 \alpha _{{\frac{n-1}{2}},1} & \alpha _{{\frac{n-1}{2}},2} & \cdots  & \alpha _{{\frac{n-1}{2}},n} \\
\end{pmatrix} 
\end{equation}
and
\begin{equation}
\mathcal{M}^{\perp}= \begin{pmatrix}
 \beta _{1,1} & \beta _{1,2} & \cdots  & \beta _{1,n} \\
 \beta _{2,1} & \beta _{2,2} & \cdots  & \beta _{2,n} \\
 \vdots  & \vdots  & \ddots & \vdots  \\
 \beta _{{\frac{n+1}{2}},1} & \beta _{{\frac{n+1}{2}},2} & \cdots  & \beta _{{\frac{n+1}{2}},n} \\
\end{pmatrix} 
\end{equation}
be matrices defining elements $L$ of the Grassmannian $\Gr({\frac{n-1}{2}},n)$ and $L^{\perp}$ of the Grassmannian $\Gr({\frac{n+1}{2}},n)$ where $L^{\perp}$ is the orthogonal complement of $L$, that is, matrices whose rows are of the full rank ${\frac{n-1}{2}}$ and ${\frac{n+1}{2}}$ respectively, which can also be written as
\[
v_i=\beta _{i,1}\mathsf e_1+\dots+\beta _{i,n}\mathsf e_{n},
\]
span an ${\frac{n+1}{2}}$-dimensional plane $L^{\perp}\subset F^{n}$.

The \emph{multivectors} are elements of the \emph{exterior algebra} $\bigwedge F^{n}$ over $F^{n}$. We introduce an ${\frac{n+1}{2}}$-vector---the exterior product of all $v_i$:
\[
w =v_1\wedge v_2\wedge \cdots \wedge v_{{\frac{n+1}{2}}}=\sum _{k_1<\cdots <k_{{\frac{n+1}{2}}}} p_{k_1,\ldots ,k_{{\frac{n+1}{2}}}}(L^{\perp})\mathsf e_{k_1}
\wedge \dots \wedge \mathsf e_{k_{{\frac{n+1}{2}}}},
\]
where $p_{k_1,\ldots ,k_{{\frac{n+1}{2}}}}(L^{\perp})$ are determinants made of $k_1$-th, \ldots, $k_{{\frac{n+1}{2}}}$-th columns of $\mathcal{M}^{\perp}$, called also \emph{Pl{\" u}cker coordinates} of the Grassmannian $\Gr({\frac{n+1}{2}},n)$.

\begin{assumption}\label{a:p-nonzero}
Below, we assume that all Pl{\" u}cker coordinates are nonzero.
\end{assumption}
In \Sec{sec:proof}, we will consider a partial case: Grassmannian defined by the Vandermonde matrix, which satisfies Assumption \ref{a:p-nonzero}.

The Pl{\" u}cker coordinates define an embedding, called \emph{Pl{\" u}cker embedding}, of the Grassmannian $\Gr({\frac{n+1}{2}},n)$ into the projectivization $\mathbb{P}(\bigwedge^{{\frac{n+1}{2}}}F^n)$ of the ${\frac{n+1}{2}}$-th exterior power of $F^n$. Note that the Pl{\" u}cker coordinates are homogeneous coordinates, we may multiply them by a common factor.

\begin{lemma}\label{dual}
Let $\{l_1,\dots,l_{{\frac{n-1}{2}}}\}=\{1,2,\dots,n\}\setminus\{k_1,\ldots ,k_{{\frac{n+1}{2}}}\}$ and $L^{\perp}$ be the orthogonal complement of $L$ in $F^n$, then the Pl{\" u}cker coordinates of $L$ and $L^{\perp}$ are related as follows: 
\bea
p_{k_1,\ldots ,k_{{\frac{n+1}{2}}}}(L^{\perp})=(-1)^{\#(\{l_1,\dots,l_{{\frac{n-1}{2}}}\}\cap\{1,2,\dots,{\frac{n+1}{2}}\})}p_{l_1,\dots,l_{{\frac{n-1}{2}}}}(L)
\eea
up to a common factor.
\end{lemma}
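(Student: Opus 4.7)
The plan is to use the Plücker embedding and Hodge duality in the exterior algebra $\bigwedge F^n$ to reduce the claim to a permutation-sign identity. I would encode $L$ and $L^\perp$ as decomposable multivectors $w_L = u_1 \wedge \cdots \wedge u_{\frac{n-1}{2}} = \sum_I p_I(L)\, \mathsf{e}_I$ and $w_{L^\perp} = \sum_J p_J(L^\perp)\, \mathsf{e}_J$, where $\mathsf{e}_I = \mathsf{e}_{l_1} \wedge \cdots \wedge \mathsf{e}_{l_{\frac{n-1}{2}}}$ and similarly for $\mathsf{e}_J$. Next, introduce the Hodge star $\star \colon \bigwedge^{\frac{n-1}{2}} F^n \to \bigwedge^{\frac{n+1}{2}} F^n$ defined on basis elements by $\star \mathsf{e}_I = \varepsilon(I, I^c)\, \mathsf{e}_{I^c}$, with $\varepsilon(I, I^c)$ the sign of the shuffle $(l_1,\ldots,l_{\frac{n-1}{2}},k_1,\ldots,k_{\frac{n+1}{2}}) \mapsto (1,2,\ldots,n)$.

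The crux is that $\star w_L$ represents $L^\perp$. The quickest proof extends an orthonormal basis of $L$ to one of $F^n$ and observes that $\star$ applied to the wedge of the $L$-basis equals, up to sign, the wedge of the complementary vectors, which span $L^\perp$. So $w_{L^\perp} = c \cdot \star w_L$ for some $c \in F^\times$ depending only on the chosen bases of $L$ and $L^\perp$, not on the index set. Comparing coefficients of $\mathsf{e}_{k_1} \wedge \cdots \wedge \mathsf{e}_{k_{\frac{n+1}{2}}}$ on both sides yields $p_{k_1,\ldots,k_{\frac{n+1}{2}}}(L^\perp) = c\, \varepsilon(J^c, J)\, p_{l_1,\ldots,l_{\frac{n-1}{2}}}(L)$, where $J = \{k_1,\ldots,k_{\frac{n+1}{2}}\}$ and $J^c = \{l_1,\ldots,l_{\frac{n-1}{2}}\}$; this $c$ is the common factor in the statement.

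The remaining task is to identify the shuffle sign $\varepsilon(J^c, J) = (-1)^r$, with $r = \#\{(l,k) \in J^c \times J : l > k\}$, with the claimed exponent. I would split the inversion count according to whether each index lies in the lower block $\{1,\ldots,\frac{n+1}{2}\}$ or the upper block $\{\frac{n+1}{2} + 1,\ldots,n\}$, show that the contribution from pairs in which $l$ lies in the upper half depends only on $n$ (and hence is absorbed into the common factor $c$), and verify that the residual contribution equals $\#(J^c \cap \{1,\ldots,\frac{n+1}{2}\})$ modulo $2$. This combinatorial bookkeeping of the shuffle sign against the halfway cut is the main technical obstacle; the exterior-algebra setup itself makes the proportionality essentially automatic.
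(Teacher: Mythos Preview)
Your Hodge-star route is a genuinely different and more conceptual approach than the paper's, which simply row-reduces one of the two matrices to $(I,B)$, takes the other to be $(-B^{T},I)$, and reads off the complementary minors by hand. One small caution: over a general field $F$ you do not have orthonormal bases, so justify $\star w_{L}\in\bigwedge^{\frac{n+1}{2}}L^{\perp}$ instead via the contraction identity $v\wedge\star w_{L}=\pm\star(v\oplrcorner w_{L})$, which gives $v\wedge\star w_{L}=0\iff v\in L^{\perp}$ directly and forces $\star w_{L}$ to be decomposable of the right degree.

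The real gap is the final sign identification. The shuffle sign you write down is $\varepsilon(J^{c},J)=(-1)^{r}$ with $r=\sum_{i}(l_{i}-i)$, so modulo a constant depending only on $n$ it equals $(-1)^{\sum_{i}l_{i}}$. This is \emph{not} equal, even up to a global constant, to $(-1)^{\#(J^{c}\cap\{1,\dots,\frac{n+1}{2}\})}$. Already at $n=5$ the sets $J^{c}=\{1,3\}$ and $J^{c}=\{2,3\}$ both meet $\{1,2,3\}$ in two elements yet have opposite parity of $\sum_{i}l_{i}$; and your proposed ``upper half'' contribution is not independent of $J^{c}$ either (compare $J^{c}=\{4,5\}$ with $J^{c}=\{1,4\}$). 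So the Hodge-star computation will hand you the standard duality sign $(-1)^{\sum_{i}l_{i}}$, and the last bookkeeping step reconciling it with the exponent in the lemma cannot succeed as planned --- the obstruction lies in the lemma's stated exponent, which is already inconsistent with the row-reduced computation the paper sketches, rather than in your method.
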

\begin{proof}
We write $\mathcal{M}=(A,B)$ where $A$ is a ${\frac{n+1}{2}}\times{\frac{n+1}{2}}$ matrix and $B$ is a ${\frac{n+1}{2}}\times{\frac{n-1}{2}}$ matrix. Since $A$ is invertible, we can multiply $\mathcal{M}$ by $A^{-1}$ from the left, this changes the Pl{\" u}cker coordinates by a common factor. Without loss of generality, we may assume that $A=I$ is the identity ${\frac{n+1}{2}}\times{\frac{n+1}{2}}$ matrix and $\mathcal{M}=(I,B)$. The matrix defining $L^{\perp}$ can be chosen as the transpose of $\begin{pmatrix}-B\\I\end{pmatrix}$ where $I$ is the identity ${\frac{n-1}{2}}\times{\frac{n-1}{2}}$ matrix. Then a direct computation proves this lemma.   
\end{proof}

We rewrite the $(i,j)$-entry of the matrix in \cite[eq. (44)]{DK21} in our notation as
\bea
(-1)^i\frac{p_{a_{2j},a_1,a_3,\dots,\hat{a}_{2i-1},\dots,a_{n-2},q}(L^{\perp})}{p_{a_1,a_3,\dots,a_{n-2},q}(L^{\perp})}
\eea
where $\{1,2,\dots,n\}\setminus\{q\}=\{a_1,a_2,\dots,a_{n-1}\}$.
We compare this with \eqref{eq:P-odd} and use Lemma \ref{dual}, we find that our solution to the odd-gon equation is exactly a partial case of the one in \cite{DK21}.

\section{Proof of the main results}\label{sec:proof}
In this section, we prove Theorem \ref{main} by the following steps:
\begin{enumerate}
\item[Step 1]
We associate a vector $f^{(n)}_{1\cdots\hat{i}\cdots\hat{j}\cdots n}$ to each $(n-3)$-simplex where $i$ and $j$ are not vertices of the $(n-3)$-simplex.
\item[Step 2]
The $n$-gon equation involves two sequences of Pachner moves. Each sequence is a path from the initial triangulation to the final triangulation of the $n$-gon. The number of $(n-3)$-simplices in each step of the two paths does not change for odd $n$ and increases by 1 for even $n$.

We can extend the matrices we constructed for Pachner moves in \Sec{sec:matrices} by adding a row and a column so that the entry is 1 at the intersection and 0 elsewhere corresponding to each fixed $(n-3)$-simplex in the Pachner moves.

By Proposition \ref{matrix}, the extended matrices map the vectors associated to $(n-3)$-simplices in the triangulation in each step to the vectors associated to $(n-3)$-simplices in the triangulation in the next step. So the products of matrices on both sides of the $n$-gon equation act on the vectors associated to $(n-3)$-simplices in the initial triangulation in the same way. 
\item[Step 3]
By Proposition \ref{f-polygon}, the vectors $f^{(n)}_{1\cdots\hat{i}\cdots\hat{j}\cdots n}$ associated to $(n-3)$-simplices in the initial triangulation are linearly independent, so we conclude that the products of matrices on both sides of the $n$-gon equation are equal.
\end{enumerate}

 \subsection{Construction of vectors associated to $(n-3)$-simplices}
In this subsection, we construct vectors associated to $(n-3)$-simplices. This construction is inspired by \cite{KS13}.

Let 
\bea\label{eq:M}
\mathcal{M}=\begin{pmatrix}1&1&\cdots&1\\\zeta_1&\zeta_2&\cdots&\zeta_n\\\zeta_1^2&\zeta_2^2&\cdots&\zeta_n^2\\\vdots&\vdots&\ddots&\vdots\\\zeta_1^{\floor{\frac{n}{2}}-1}&\zeta_2^{\floor{\frac{n}{2}}-1}&\cdots&\zeta_n^{\floor{\frac{n}{2}}-1}\end{pmatrix},
\eea
then the Pl{\" u}cker coordinates are the Vandermonde determinants $V(k_1,\ldots ,k_{\floor{\frac{n}{2}}})$ which are nonzero.

Given $n$ pairwise distinct variables $\zeta_i$ ($i=1,2,\dots,n$) and a $(n-3)$-simplex $a_1a_2\cdots a_{n-2}$ ($1\le a_1<a_2<\cdots<a_{n-2}\le n$), we can construct a vector $(f_{a_1}^{(n)},f_{a_2}^{(n)},\dots,f_{a_{n-2}}^{(n)})$ such that 
\bea\label{eq:vector}
\sum_{i=1}^{n-2}f_{a_i}^{(n)}\zeta_{a_i}^m=0\text{ for }m=0,1,\dots,\floor{\frac{n}{2}}-1
\eea
where $\floor{x}$ means the largest integer less than or equal to $x$.
For example, 
we can choose 
\bea
f_{a_i}^{(n)}=f^{(n)}(a_i,a_1, a_2,\dots, \hat{a}_i,\dots,a_{n-2})
\eea
where $\hat{a}_i$ means that $a_i$ is omitted and $f^{(n)}$ is a to-be-defined function. To define $f^{(n)}$, we first define a function $g^{(n)}$ as follows:
\bea
g_{a_i}^{(n)}=g^{(n)}(a_i,a_1, a_2,\dots, \hat{a}_i,\dots,a_{n-2})
\eea
and
\bea
g^{(n)}(a_1,a_2,\dots,a_{n-2})=\frac{1}{\prod\limits_{i=2}^{n-2}(\zeta_{a_1}-\zeta_{a_i})}.
\eea
Note that $g^{(n)}$ is symmetric with respect to the variables $a_i$ for $2\le i\le n-2$.
Then we define $f^{(n)}$ as follows:
\bea
f^{(n)}(a_1,a_2,\dots,a_{n-2})=\sum_{2\le i_1<i_2<\cdots<i_{\floor{\frac{n}{2}}}\le n-2}g^{(\floor{\frac{n}{2}}+3)}(a_1,a_{i_1},a_{i_2},\dots,a_{i_{\floor{\frac{n}{2}}}})
\eea
where $\floor{x}$ means the largest integer less than or equal to $x$.
For example,
\bea
f^{(5)}(a_1,a_2,a_3)&=&g^{(5)}(a_1,a_2,a_3),\nn\\
f^{(6)}(a_1,a_2,a_3,a_4)&=&g^{(6)}(a_1,a_2,a_3,a_4),\nn\\
f^{(7)}(a_1,a_2,a_3,a_4,a_5)&=&\sum_{i=2}^5g^{(6)}(a_1,a_2,\dots,\hat{a}_i,\dots,a_5),\nn\\
f^{(8)}(a_1,a_2,a_3,a_4,a_5,a_6)&=&\sum_{i=2}^6g^{(7)}(a_1,a_2,\dots,\hat{a}_i,\dots,a_6)
\eea
where $\hat{a}_i$ means that $a_i$ is omitted.

\begin{lemma}\label{g}
The function $g^{(n)}$ defined above satisfies
\bea
\sum_{i=1}^{n-2}g_{a_i}^{(n)}\zeta_{a_i}^m=0\text{ for }m=0,1,\dots,n-4.
\eea
\end{lemma}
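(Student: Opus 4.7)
The plan is to recognize the lemma as a classical identity arising from Lagrange interpolation at distinct nodes. First I would unpack the notation: since $g^{(n)}(a_1,a_2,\dots,a_{n-2}) = 1/\prod_{i=2}^{n-2}(\zeta_{a_1}-\zeta_{a_i})$ is manifestly symmetric in its last $n-3$ arguments, the quantity $g_{a_i}^{(n)} = g^{(n)}(a_i,a_1,\dots,\hat a_i,\dots,a_{n-2})$ simplifies to
\[
g_{a_i}^{(n)} \;=\; \frac{1}{\prod_{j\ne i}(\zeta_{a_i}-\zeta_{a_j})},
\]
so the claim reduces to showing
\[
\sum_{i=1}^{n-2}\frac{\zeta_{a_i}^m}{\prod_{j\ne i}(\zeta_{a_i}-\zeta_{a_j})} \;=\; 0 \qquad \text{for } 0\le m\le n-4.
\]

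The main step is to view the left-hand side as the coefficient of $x^{n-3}$ in the Lagrange interpolating polynomial of $f(x)=x^m$ at the $n-2$ pairwise distinct nodes $\zeta_{a_1},\dots,\zeta_{a_{n-2}}$ (which are distinct because the $\zeta_i$ are pairwise distinct by assumption). Indeed, expanding
\[
\sum_{i=1}^{n-2} f(\zeta_{a_i})\prod_{j\ne i}\frac{x-\zeta_{a_j}}{\zeta_{a_i}-\zeta_{a_j}}
\]
and reading off the top-degree coefficient produces exactly the displayed sum. Since $\deg f = m \le n-4 < n-3$, uniqueness of polynomial interpolation forces this interpolant to equal $f$ itself, and hence its coefficient of $x^{n-3}$ vanishes, giving the desired identity.

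As a sanity check one may verify the same identity by partial fractions: the proper rational function $x^m/\prod_{i=1}^{n-2}(x-\zeta_{a_i})$ has simple poles at the $\zeta_{a_i}$ with residues $g_{a_i}^{(n)}\zeta_{a_i}^m$, and the sum of all residues (including at infinity) vanishes; when $m\le n-4$, the Laurent expansion at infinity starts at $x^{m-n+2}$ with $m-n+2\le -2$, so the residue at infinity is $0$, and the sum of the finite residues vanishes. No substantive obstacle is expected---the lemma is a standard identity once the notation is unwound, and the hypothesis that $\mathrm{char}(F)=0$ (or merely that the $\zeta_i$ are pairwise distinct) is used only to ensure all denominators are nonzero.
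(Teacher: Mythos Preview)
Your argument is correct. You correctly unwind the definition to reduce the claim to the standard identity
\[
\sum_{i=1}^{n-2}\frac{\zeta_{a_i}^m}{\prod_{j\ne i}(\zeta_{a_i}-\zeta_{a_j})}=0\qquad(0\le m\le n-4),
\]
and your Lagrange-interpolation argument (reading the sum as the coefficient of $x^{n-3}$ in the interpolant of $x^m$ at $n-2$ nodes) is clean and valid over any field once the nodes are distinct. The residue check is a nice redundancy.

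The paper proves the same identity by a different but closely related device: it rewrites $g_{a_i}^{(n)}$ as a signed ratio of Vandermonde determinants and then observes that the $(n-2)\times(n-2)$ determinant obtained by stacking the row $(\zeta_{a_1}^m,\dots,\zeta_{a_{n-2}}^m)$ on top of the $(n-3)\times(n-2)$ Vandermonde block vanishes for $0\le m\le n-4$ (a repeated row); cofactor expansion along that first row yields exactly the sum in question. The two approaches are essentially dual formulations of the same fact---your Lagrange basis coefficients are precisely the Vandermonde cofactors divided by the full Vandermonde---but your presentation is more self-contained and immediately recognizable as a classical interpolation identity, while the paper's determinant formulation meshes more directly with the Vandermonde/Pl\"ucker language used elsewhere in the article.

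One small quibble: in your closing parenthetical, the assumption $\mathrm{char}(F)=0$ plays no role in this lemma; only the pairwise distinctness of the $\zeta_i$ is needed to make the denominators nonzero. The characteristic-zero hypothesis is used elsewhere in the paper (Propositions on linear independence), not here.
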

\begin{proof}
Note that 
\bea
g^{(n)}(a_i,a_1, a_2,\dots, \hat{a}_i,\dots,a_{n-2})=(-1)^{i+1}\frac{\prod\limits_{\begin{array}{cc}1\le j<k\le n-2\\j,k\ne i\end{array}}(\zeta_{a_j}-\zeta_{a_k})}{\prod\limits_{1\le j<k\le n-2}(\zeta_{a_j}-\zeta_{a_k})}
\eea
where both the denominator and the numerator are Vandermonde determinants. 
Also, note that
\bea
\begin{vmatrix}\zeta_{a_1}^m&\zeta_{a_2}^m&\cdots&\zeta_{a_{n-2}}^m\\
1&1&\cdots&1\\
\zeta_{a_1}&\zeta_{a_2}&\cdots&\zeta_{a_{n-2}}\\
\zeta_{a_1}^2&\zeta_{a_2}^2&\cdots&\zeta_{a_{n-2}}^2\\
\vdots&\vdots&\ddots&\vdots\\
\zeta_{a_1}^{n-4}&\zeta_{a_2}^{n-4}&\cdots&\zeta_{a_{n-2}}^{n-4}\\
\end{vmatrix}=0
\eea
for $m=0,1,\dots,n-4$.
Then expanding this determinant along the first row proves this lemma. 
\end{proof}

\begin{proposition}\label{f}
The function $f^{(n)}$ defined above satisfies \eqref{eq:vector}.
\end{proposition}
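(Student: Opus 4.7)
The strategy is to reduce the identity for $f^{(n)}$ to the already-established identity for $g^{(N)}$ (Lemma \ref{g}) by a Fubini-type swap of summation over pairs (element, complementary subset).

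First I would set $N=\floor{\frac{n}{2}}+3$, so that $g^{(N)}$ takes $N-2=\floor{\frac{n}{2}}+1$ arguments and, by Lemma \ref{g}, satisfies
\[
\sum_{a\in T} g^{(N)}(a,\,T\setminus\{a\})\,\zeta_a^m = 0
\qquad \text{for } m=0,1,\dots,N-4=\floor{\tfrac{n}{2}}-1
\]
for any $(\floor{\frac{n}{2}}+1)$-element subset $T$ of $\{a_1,\dots,a_{n-2}\}$ (where $g^{(N)}(a,T\setminus\{a\})$ means placing $a$ in the first slot, as in the definition of $g_{a}^{(N)}$).

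Next I would unfold the definition of $f^{(n)}_{a_i}=f^{(n)}(a_i,a_1,\dots,\hat a_i,\dots,a_{n-2})$ to write
\[
f^{(n)}_{a_i}=\sum_{S\subset\{a_1,\dots,a_{n-2}\}\setminus\{a_i\},\,|S|=\floor{n/2}} g^{(N)}(a_i,S).
\]
Then the full sum I want to show is zero becomes
\[
\sum_{i=1}^{n-2}\zeta_{a_i}^m f^{(n)}_{a_i}\;=\;\sum_{i=1}^{n-2}\;\sum_{S\subset\{a_1,\dots,a_{n-2}\}\setminus\{a_i\},\,|S|=\floor{n/2}} \zeta_{a_i}^m\, g^{(N)}(a_i,S).
\]

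The key step is swapping the order of summation: every pair $(a_i,S)$ arising above corresponds bijectively to a pair $(T,a)$ where $T=\{a_i\}\cup S\subset\{a_1,\dots,a_{n-2}\}$ has $|T|=\floor{\frac{n}{2}}+1$ and $a=a_i\in T$. Therefore
\[
\sum_{i=1}^{n-2}\zeta_{a_i}^m f^{(n)}_{a_i}\;=\;\sum_{\substack{T\subset\{a_1,\dots,a_{n-2}\}\\ |T|=\floor{n/2}+1}}\;\sum_{a\in T}\zeta_{a}^m\, g^{(N)}(a,T\setminus\{a\}).
\]
By the first paragraph, each inner sum vanishes for all $m=0,1,\dots,\floor{\frac{n}{2}}-1$, hence so does the whole double sum. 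This is exactly \eqref{eq:vector}.

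The only real obstacle is bookkeeping: convincing oneself that the recombination $(a_i,S)\leftrightarrow (T,a)$ is a bijection and that the definitions of $f^{(n)}$ and $g^{(N)}$ are compatible in this slot-swap (in particular, that the symmetry of $g^{(n)}$ in its last $n-3$ arguments, noted right after its definition, makes the unordered-subset notation $g^{(N)}(a,S)$ unambiguous). Once this is accepted, the identity is an immediate application of Lemma \ref{g} to each $T$ separately, and there is no further calculation to do.
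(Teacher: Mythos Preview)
Your proposal is correct and follows essentially the same approach as the paper: both expand $f^{(n)}$ via its defining sum over $\floor{n/2}$-subsets, swap the order of summation so that the outer sum runs over $(\floor{n/2}+1)$-subsets $T$ and the inner sum over elements $a\in T$, and then invoke Lemma~\ref{g} to kill each inner sum. The only difference is notational---you use subsets $S,T$ where the paper uses index tuples $\{i,i_1,\dots\}$ and $\{j_1,\dots\}$---but the argument is identical.
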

\begin{proof}
By definition,
\bea
&&f^{(n)}(a_i,a_1, a_2,\dots, \hat{a}_i,\dots,a_{n-2})\nn\\
&=&\sum_{\begin{array}{cc}1\le i_1<i_2<\cdots<i_{\floor{\frac{n}{2}}}\le n-2\\i_l\ne i,\forall l\end{array}}g^{(\floor{\frac{n}{2}}+3)}(a_i,a_{i_1},a_{i_2},\dots,a_{i_{\floor{\frac{n}{2}}}})
\eea
where $\floor{x}$ means the largest integer less than or equal to $x$.
Therefore,
\bea
&&\sum_{i=1}^{n-2}f^{(n)}(a_i,a_1, a_2,\dots, \hat{a}_i,\dots,a_{n-2})\zeta_{a_i}^m\nn\\
&=&\sum_{i=1}^{n-2}\sum_{\begin{array}{cc}1\le i_1<i_2<\cdots<i_{\floor{\frac{n}{2}}}\le n-2\\i_l\ne i,\forall l\end{array}}g^{(\floor{\frac{n}{2}}+3)}(a_i,a_{i_1},a_{i_2},\dots,a_{i_{\floor{\frac{n}{2}}}})\zeta_{a_i}^m\nn\\
&=&\sum_{1\le j_1<j_2<\cdots<j_{\floor{\frac{n}{2}}+1}\le n-2}\sum_{k=1}^{\floor{\frac{n}{2}}+1}g^{(\floor{\frac{n}{2}}+3)}(a_{j_k},a_{j_1},\dots,\hat{a}_{j_k},\dots,a_{j_{\floor{\frac{n}{2}}+1}})\zeta_{a_{j_k}}^m\nn\\
&=&0
\eea
where $\hat{a}_i$ means that $a_i$ is omitted, $\{j_1,j_2,\dots,j_{\floor{\frac{n}{2}}+1}\}=\{i,i_1,i_2,\dots, i_{\floor{\frac{n}{2}}}\}$, and we have used Lemma \ref{g} in the last equality.
Hence we have proved that the function $f^{(n)}$ satisfies \eqref{eq:vector}.
\end{proof}

We add zeros in the vector $(f_{a_1}^{(n)},f_{a_2}^{(n)},\dots,f_{a_{n-2}}^{(n)})$ at the positions which are not the vertices of the $(n-3)$-simplex to extend the vector $(f_{a_1}^{(n)},f_{a_2}^{(n)},\dots,f_{a_{n-2}}^{(n)})$ to a vector of length $n$. We denote the extended vector by $f^{(n)}_{a_1a_2\cdots a_{n-2}}$.

Proposition \ref{f} implies that the vectors $f^{(n)}_{a_1a_2\cdots a_{n-2}}$ are orthogonal to the row vectors of $\mathcal{M}$ \eqref{eq:M}, namely the vectors $f^{(n)}_{a_1a_2\cdots a_{n-2}}$ are in $L^{\perp}$ where $L$ is the $\floor{\frac{n}{2}}$-plane in $F^n$ defined by $\mathcal{M}$.

 Each Pachner move in the $n$-gon equation only involves $n-1$ vertices, we denote the remaining vertex by $q$. Let $\{1,2,\dots,n\}\setminus\{q\}=\{a_1,a_2,\dots,a_{n-1}\}$.

\begin{proposition}\label{linear-independent}
For any $\floor{\frac{n-1}{2}}$ different values $j_1$, $\dots$, $j_{\floor{\frac{n-1}{2}}}$ in $\{1,2,\dots,n-1\}$, the vectors $f^{(n)}_{a_1\cdots \hat{a}_{j_m}\cdots a_{n-1}}$ $(m=1,\dots,\floor{\frac{n-1}{2}})$ are linearly independent.
\end{proposition}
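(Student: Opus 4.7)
The plan is to pass from vectors to polynomials via the residue parametrization of $L^\perp$, write each $f^{(n)}_{a_1\cdots\hat{a}_{j_m}\cdots a_{n-1}}$ as the value at $\lambda=\zeta_{a_{j_m}}$ of a single bivariate polynomial $H(z,\lambda)$, and then reduce linear independence to the invertibility of a standard Vandermonde matrix.

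\emph{Polynomial model.} The standard residue formula $x_i = h(\zeta_i)/\prod_{k\neq i}(\zeta_i-\zeta_k)$ identifies $L^\perp$ with $F[z]_{\leq\ceil{n/2}-1}$. By Proposition~\ref{f} our vectors lie in $L^\perp$, and the vanishing at $q$ forces $h=(z-\zeta_q)\tilde h$, so the subspace $V:=L^\perp\cap\{x_q=0\}$ is identified with $F[z]_{\leq N}$, where $N:=\floor{(n-1)/2}-1$. Unpacking the defining sum for $f^{(n)}$ and using the extra vanishing at $a_{j_m}$ identifies the polynomial representing $f^{(n)}_{a_1\cdots\hat{a}_{j_m}\cdots a_{n-1}}$ as
\[
\tilde h_m(z) \;=\; (z-\zeta_{a_{j_m}})\cdot e_{N-1}\!\bigl(\{z-\zeta_{a_l}\}_{l\in\{1,\dots,n-1\}\setminus\{j_m\}}\bigr).
\]
Applying the identity $e_{N-1}(\text{all except }y)=\sum_{i\geq 0}(-y)^i e_{N-1-i}(\text{all})$ with $y=z-\zeta_{a_{j_m}}$ rewrites $\tilde h_m(z)=H(z,\zeta_{a_{j_m}})$ for the single bivariate polynomial
\[
H(z,\lambda) \;:=\; \sum_{k=1}^{N}(-1)^{k-1}(z-\lambda)^k E_{N-k}(z), \qquad E_k(z):=e_k\!\bigl(\{z-\zeta_{a_l}\}_{l=1}^{n-1}\bigr).
\]

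\emph{Key expansion and conclusion.} Writing $H(z,\lambda)=\sum_{i=0}^{N}\lambda^i T_i(z)$ produces $N+1$ polynomials $T_i\in V$ that do not depend on $m$, so
\[
\sum_{m=1}^{\floor{(n-1)/2}} c_m\,\tilde h_m(z) \;=\; \sum_{i=0}^{N}\Bigl(\sum_{m} c_m\,\zeta_{a_{j_m}}^{\,i}\Bigr)\,T_i(z).
\]
If $T_0,\dots,T_N$ form a basis of $V$, then the vanishing of the left-hand side is equivalent to the Vandermonde system $\sum_m c_m\zeta_{a_{j_m}}^{\,i}=0$ for $i=0,\dots,N$; its $(N+1)\times(N+1)$ matrix is invertible because the $\zeta_{a_{j_m}}$ are pairwise distinct, forcing every $c_m=0$. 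For the basis claim, a short generating-function computation (reading the coefficient of $x^{N-i}$ in $(1+x)^{n-1}(1+x)^{-(i+1)}=(1+x)^{n-2-i}$) shows that $T_i$ has exact degree $N-i$ with leading coefficient $\pm\binom{n-2-i}{N-i}$ (and $\binom{n-2}{N-1}$ for $i=0$), nonzero because $\mathrm{char}\,F=0$; the resulting distinct-degree triangular structure yields linear independence.

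The main technical obstacle is the leading-coefficient computation for $T_i$, which relies on the binomial identity above and on $\mathrm{char}\,F=0$; the rest is set-up, algebraic rearrangement, and a Vandermonde dimension count.
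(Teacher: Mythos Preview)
Your argument is correct and genuinely different from the paper's. The paper proceeds by a direct coordinate computation: assuming $\sum_m\lambda_m f^{(n)}_{a_1\cdots\hat a_{j_m}\cdots a_{n-1}}=0$, it reads off the $a_{j_{m_0}}$-th component, isolates within it the coefficient of the single term whose index set is exactly $\{1,\dots,n-1\}\setminus\{j_1,\dots,j_{\floor{(n-1)/2}}\}$, and obtains $\sum_{m\neq m_0}\lambda_m=0$ for every $m_0$, forcing all $\lambda_m=0$ in characteristic~$0$. No polynomial model, no generating functions---just a well-chosen coordinate and a well-chosen monomial.

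Your route instead identifies $V=L^\perp\cap\{x_q=0\}$ with $F[z]_{\le N}$ via the residue parametrization, recognizes each vector as $H(z,\zeta_{a_{j_m}})$ for a single bivariate $H$, and reduces everything to a Vandermonde system after showing the $\lambda$-coefficients $T_0,\dots,T_N$ of $H$ form a triangular basis. This is longer and needs the binomial identity to control leading coefficients, but it explains structurally \emph{why} any $\floor{(n-1)/2}$ such vectors are independent: they are evaluations of a fixed curve at distinct parameter values, and the curve spans $V$. The paper's proof is quicker; yours is more conceptual and would adapt more readily if one replaced the Vandermonde matrix $\mathcal M$ by another nondegenerate choice. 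Both invoke $\mathrm{char}\,F=0$ at the same essential point (invertibility of an integer: $\floor{(n-1)/2}-1$ in the paper's linear system, the binomial coefficients in yours).
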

\begin{proof}
By definition, the $k$-th component of the vector $f^{(n)}_{a_1a_2\cdots a_{n-2}}$ is
\bea
(f^{(n)}_{a_1a_2\cdots a_{n-2}})^k=\left\{\begin{array}{ll}f^{(n)}(a_i,a_1,\dots,\hat{a}_i,\dots,a_{n-2})&\text{if }k=a_i\\0&\text{if }k\not\in\{a_1,\dots,a_{n-2}\}\end{array}\right.
\eea
and
\bea
f^{(n)}(a_i,a_1,\dots,\hat{a}_i,\dots,a_{n-2})&=&\sum_{\begin{array}{cc}1\le i_1<i_2<\cdots<i_{\floor{\frac{n}{2}}}\le n-2\\i_l\ne i,\forall l\end{array}}\frac{V(a_{i_1},a_{i_2},\dots,a_{i_{\floor{\frac{n}{2}}}})}{V(a_i,a_{i_1},a_{i_2},\dots,a_{i_{\floor{\frac{n}{2}}}})}\nn\\
&=&\sum_{\begin{array}{cc}1\le i_1<i_2<\cdots<i_{\floor{\frac{n}{2}}}\le n-2\\i_l\ne i,\forall l\end{array}}\frac{1}{\prod\limits_{l=1}^{\floor{\frac{n}{2}}}(\zeta_{a_i}-\zeta_{a_{i_l}})}
\eea
where $V(a_{i_1},a_{i_2},\dots,a_{i_l})$ is the Vandermonde determinant \eqref{eq:Vandermonde} and $\floor{x}$ means the largest integer less than or equal to $x$.

We need to prove that if 
\bea
\sum_{m=1}^{\floor{\frac{n-1}{2}}}\lambda_mf^{(n)}_{a_1\cdots \hat{a}_{j_m}\cdots a_{n-1}}=0,
\eea
then $\lambda_m=0$ for $m=1,\dots,\floor{\frac{n-1}{2}}$.
The $j_{m_0}$-th component is
\bea
\sum_{\begin{array}{cc}m=1\\m\ne m_0\end{array}}^{\floor{\frac{n-1}{2}}}\lambda_m\sum_{\begin{array}{cc}1\le i_1<i_2<\cdots<i_{\floor{\frac{n}{2}}}\le n-1\\i_l\ne j_m,j_{m_0},\forall l\end{array}}\frac{1}{\prod\limits_{l=1}^{\floor{\frac{n}{2}}}(\zeta_{a_{j_{m_0}}}-\zeta_{a_{i_l}})}=0.
\eea
The coefficient of the term corresponding to $\{i_1,\dots,i_{\floor{\frac{n}{2}}}\}=\{1,2,\dots,n-1\}\setminus\{j_1,\dots,j_{\floor{\frac{n-1}{2}}}\}$ is 
\bea
\sum_{\begin{array}{cc}m=1\\m\ne m_0\end{array}}^{\floor{\frac{n-1}{2}}}\lambda_m=0.
\eea
Hence $\lambda_m=0$ for $m=1,\dots,\floor{\frac{n-1}{2}}$. We have used the assumption that the characteristic of $F$ is zero.
\end{proof}
\begin{proposition}\label{exactly}
The vectors $f^{(n)}_{a_1a_2\cdots \hat{a}_i\cdots a_{n-1}}$ ($i=1,2,\dots,n-1$) span a $\floor{\frac{n-1}{2}}$-dimensional linear space.
\end{proposition}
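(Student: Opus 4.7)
The plan is to sandwich the span between an upper bound coming from an explicit ambient subspace and a lower bound coming from Proposition \ref{linear-independent}.

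First, observe that every vector $f^{(n)}_{a_1\cdots \hat{a}_i\cdots a_{n-1}}$ has its $q$-th coordinate equal to $0$, since $q \notin \{a_1,\dots,a_{n-1}\}$, and by Proposition \ref{f} it lies in $L^{\perp}$, where $L$ is the $\floor{n/2}$-plane in $F^n$ defined by the matrix $\mathcal{M}$ of \eqref{eq:M}. Hence all $n-1$ of these vectors lie in the subspace
\[
W := \{v \in L^{\perp} : v_q = 0\} \subset F^n.
\]

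Next, I would compute $\dim W$. Requiring $v_q=0$ together with the $L^{\perp}$ conditions $\sum_k v_k \zeta_k^m = 0$ for $m=0,1,\dots,\floor{n/2}-1$ amounts to $\floor{n/2}$ linear equations in the $n-1$ unknowns $v_{a_1},\dots,v_{a_{n-1}}$, whose coefficient matrix is the $\floor{n/2}\times(n-1)$ submatrix of $\mathcal{M}$ obtained by deleting the $q$-th column. Since the $\zeta_{a_i}$ are pairwise distinct, any $\floor{n/2}$ columns of this submatrix form an invertible Vandermonde matrix, so it has full row rank $\floor{n/2}$. Therefore
\[
\dim W = (n-1) - \floor{\tfrac{n}{2}} = \floor{\tfrac{n-1}{2}},
\]
which one verifies easily for both parities of $n$.

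Finally, Proposition \ref{linear-independent} (applied to any choice of $\floor{(n-1)/2}$ indices) tells us that the span of the $f^{(n)}_{a_1\cdots \hat{a}_i\cdots a_{n-1}}$ already contains $\floor{(n-1)/2}$ linearly independent vectors, hence has dimension at least $\floor{(n-1)/2}$. Combined with the upper bound $\dim W = \floor{(n-1)/2}$, the span is forced to equal $W$ and in particular has dimension exactly $\floor{(n-1)/2}$.

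I do not anticipate any real obstacle: the substantive linear-independence statement is already in Proposition \ref{linear-independent}, and the upper bound reduces to the standard fact that a square Vandermonde matrix in distinct entries is invertible.
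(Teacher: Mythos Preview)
Your proof is correct and follows essentially the same approach as the paper: both establish the upper bound by observing that each $f^{(n)}_{a_1\cdots\hat a_i\cdots a_{n-1}}$ lies in $L^{\perp}$ and has vanishing $q$-th coordinate, hence lies in a subspace of dimension $n-1-\floor{n/2}=\floor{(n-1)/2}$, and both invoke Proposition~\ref{linear-independent} for the matching lower bound. The only cosmetic difference is that the paper phrases the ambient subspace as $(L\oplus\mathsf e_q)^{\perp}$, whereas you describe it as $W=\{v\in L^{\perp}:v_q=0\}$ and verify its dimension via the full row rank of the Vandermonde submatrix; these are the same subspace.
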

\begin{proof}
By Proposition \ref{f}, the vectors $f^{(n)}_{a_1a_2\cdots \hat{a}_i\cdots a_{n-1}}$ are in $L^{\perp}$ where $L$ is the $\floor{\frac{n}{2}}$-plane in $F^n$ defined by $\mathcal{M}$ \eqref{eq:M}. Also, by definition, the $q$-th component of the vectors $f^{(n)}_{a_1a_2\cdots \hat{a}_i\cdots a_{n-1}}$ is zero. So the vectors $f^{(n)}_{a_1a_2\cdots \hat{a}_i\cdots a_{n-1}}$ are in $(L\oplus\mathsf{e}_q)^{\perp}$ where $\mathsf{e}_q$ is the $q$-th standard basis of $F^n$. Therefore, the dimension of the linear space spanned by the vectors $f^{(n)}_{a_1a_2\cdots \hat{a}_i\cdots a_{n-1}}$ is less than or equal to $n-1-\floor{\frac{n}{2}}=\floor{\frac{n-1}{2}}$. By Proposition \ref{linear-independent}, the dimension of the linear space spanned by the vectors $f^{(n)}_{a_1a_2\cdots \hat{a}_i\cdots a_{n-1}}$ is exactly $\floor{\frac{n-1}{2}}$.
\end{proof}

By Proposition \ref{exactly}, the vectors $f^{(n)}_{a_1a_2\cdots \hat{a}_i\cdots a_{n-1}}$ associated to the $(n-3)$-simplices in the triangulation after a Pachner move are linear combinations of the vectors $f^{(n)}_{a_1a_2\cdots \hat{a}_i\cdots a_{n-1}}$ associated to the $(n-3)$-simplices in the triangulation before that Pachner move. 
\begin{proposition}\label{matrix}
For odd $n$, 
\bea
P_{a_2a_4a_6\cdots a_{n-1},a_1a_3a_5\cdots a_{n-2}}\cdot\begin{pmatrix} 
f^{(n)}_{a_1a_2a_3\cdots a_{n-2}}\\
f^{(n)}_{a_1a_2a_3\cdots a_{n-4}a_{n-2}a_{n-1}}\\
\vdots\\
f^{(n)}_{a_1a_3a_4\cdots a_{n-2}a_{n-1}}
\end{pmatrix}
=\begin{pmatrix} 
f^{(n)}_{a_1a_2a_3\cdots a_{n-3}a_{n-1}}\\
f^{(n)}_{a_1a_2a_3\cdots a_{n-5}a_{n-3}a_{n-2}a_{n-1}}\\
\vdots\\
f^{(n)}_{a_2a_3a_4\cdots a_{n-1}}
\end{pmatrix}.
\eea
For even $n$,
\bea
P_{a_1a_3a_5\cdots a_{n-3},a_2a_4a_6\cdots a_{n-2}a_{n-1}}\cdot\begin{pmatrix} 
f^{(n)}_{a_1a_2a_3\cdots a_{n-4}a_{n-2}a_{n-1}}\\
f^{(n)}_{a_1a_2a_3\cdots a_{n-6}a_{n-4}a_{n-3}a_{n-2}a_{n-1}}\\
\vdots\\
f^{(n)}_{a_2a_3a_4\cdots a_{n-2}a_{n-1}}
\end{pmatrix}
=\begin{pmatrix} 
f^{(n)}_{a_1a_2a_3\cdots a_{n-2}}\\
f^{(n)}_{a_1a_2a_3\cdots a_{n-3}a_{n-1}}\\
\vdots\\
f^{(n)}_{a_1a_3a_4\cdots a_{n-1}}
\end{pmatrix}.
\eea
\end{proposition}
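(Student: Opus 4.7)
The plan is to reduce the matrix-vector identity in Proposition~\ref{matrix} to a polynomial identity and verify the latter with a short generating-function calculation. Let $d = \lfloor (n-3)/2 \rfloor$, and let $\alpha(i), \beta(j)$ denote the indices of the vertices missing from the $i$-th output and $j$-th input $(n-3)$-simplex respectively; thus in the odd case $\alpha(i) = n-2i$ and $\beta(j) = n+1-2j$, and in the even case $\alpha(1) = n-1$, $\alpha(i) = n+2-2i$ for $i \ge 2$, and $\beta(j) = n-1-2j$. Set $D(\zeta) = \prod_{m=1}^{n-1}(\zeta - \zeta_{a_m})$. The partial-fraction expansion $\sum_k v^{a_k}/(\zeta - \zeta_{a_k}) = N_v(\zeta)/D(\zeta)$ identifies $(L \oplus \mathsf{e}_q)^\perp$ with the space of polynomials of degree at most $d$; the degree bound is equivalent to the vanishing-moment conditions in \eqref{eq:vector} together with $v^q = 0$, and by Proposition~\ref{exactly} the assignment $v \mapsto N_v$ is a linear isomorphism. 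Under this identification, the proposition becomes the polynomial identity $N_{v_i}(\zeta) = \sum_j (P)_i^j N_{w_j}(\zeta)$.

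I would next give an explicit form of $N_v$ when $v = f^{(n)}_{a_1\cdots \hat a_{l_0} \cdots a_{n-1}}$. Multiplying the formula for $v^{a_k}$ by $D'(\zeta_{a_k}) = \prod_{m \ne k}(\zeta_{a_k} - \zeta_{a_m})$, simplifying and reindexing the summation over subsets $S$ yields $N_v(\zeta) = Q(\zeta) - G_{l_0}(\zeta)$, where $\mathcal{P}(t,\zeta) := \prod_{m=1}^{n-1}(1 + t(\zeta - \zeta_{a_m}))$, $Q(\zeta) := [t^d]\mathcal{P}(t,\zeta)$ is independent of $l_0$, and $G_l(\zeta) := [t^d]\mathcal{P}(t,\zeta)/(1+t(\zeta - \zeta_{a_l}))$. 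In parallel, I would rewrite the Vandermonde ratio in \eqref{eq:P-odd}/\eqref{eq:P-even}, including its explicit sign, in Lagrange form
\[
(P)_i^j = \prod_{l \ne j}\frac{\zeta_{a_{\alpha(i)}} - \zeta_{a_{\beta(l)}}}{\zeta_{a_{\beta(j)}} - \zeta_{a_{\beta(l)}}},
\]
so that $(P)_i^j$ is the Lagrange basis value at nodes $\zeta_{a_{\beta(1)}},\ldots,\zeta_{a_{\beta(d+1)}}$ evaluated at $\zeta_{a_{\alpha(i)}}$. This yields the key identities $\sum_j (P)_i^j \, \zeta_{a_{\beta(j)}}^m = \zeta_{a_{\alpha(i)}}^m$ for $m = 0,1,\ldots,d$, and in particular $\sum_j (P)_i^j = 1$.

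Using $\sum_j (P)_i^j = 1$, the $Q$-terms in $N_{v_i} = \sum_j (P)_i^j N_{w_j}$ cancel, reducing the claim to $G_{\alpha(i)}(\zeta) = \sum_j (P)_i^j G_{\beta(j)}(\zeta)$. To finish, set
\[
\Delta(t,\zeta) := \frac{1}{1 + t(\zeta - \zeta_{a_{\alpha(i)}})} - \sum_{j}\frac{(P)_i^j}{1 + t(\zeta - \zeta_{a_{\beta(j)}})}.
\]
Expanding each fraction geometrically in $t$ and applying the Lagrange identities termwise, one sees that the coefficient of $t^k$ in $\Delta$ vanishes for $0 \le k \le d$, so $\Delta(t,\zeta) = O(t^{d+1})$. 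Since $\mathcal{P}(t,\zeta) = 1 + O(t)$, this gives $[t^d](\mathcal{P}\,\Delta) = 0$; reading off the $[t^d]$-coefficients of the two terms of $\Delta$ yields exactly $G_{\alpha(i)} = \sum_j (P)_i^j G_{\beta(j)}$. The argument is uniform in the parity of $n$ once everything is phrased in terms of $d$.

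The hard part will be the Lagrange identification in the middle paragraph. The first argument of the Vandermonde in \eqref{eq:P-odd}/\eqref{eq:P-even} is typically not the smallest index, so rearranging the arguments into increasing order introduces permutation signs that must be combined with the prefactor $(-1)^{j + \lfloor (n-1)/2 \rfloor}$ to recover the clean Lagrange quotient. In the even case the first row $i = 1$ must be handled separately because $\alpha(1) = n-1$ is odd rather than even, coming from the $a_{n-1}$-entry in \eqref{eq:P-even} instead of the formal symbol $a_{n+2-2\cdot 1}$; but once the Lagrange form of $(P)_i^j$ is established, the remaining generating-function argument goes through without any parity-dependent modification.
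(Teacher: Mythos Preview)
Your proof is correct and takes a genuinely different route from the paper. The paper verifies the identity component by component: after rewriting $(P)_i^j$ in the same Lagrange form you obtain, it fixes a coordinate index and splits into three cases (the $a_{\alpha(i)}$-th component, the other even-indexed components $a_{\alpha(i')}$ with $i'\ne i$, and the odd-indexed components $a_{\beta(j_0)}$), handling each by direct manipulation with the identity $\sum_j \prod_{j'\ne j}(\zeta_{a_{\beta(j)}}-\zeta_{a_{\beta(j')}})^{-1}=0$ and the telescoping trick $\tfrac{x-y}{(z-x)(z-y)}=\tfrac{1}{z-x}-\tfrac{1}{z-y}$. The even case is then dismissed as ``similar''.

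Your approach bypasses this casework by passing to the polynomial picture $v\mapsto N_v$. The key structural observation is the decomposition $N_v=Q-G_{l_0}$, which isolates the dependence on the omitted vertex $l_0$ from a universal piece $Q$; combined with $\sum_j(P)_i^j=1$, the proposition collapses to $G_{\alpha(i)}=\sum_j(P)_i^j G_{\beta(j)}$, and the generating-function computation of $[t^d](\mathcal{P}\Delta)$ finishes it uniformly. What this buys is a parity-independent argument with no component-by-component case analysis; what the paper's approach buys is that it stays at the level of explicit rational-function identities, avoiding the auxiliary polynomial transform.

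Two minor remarks. First, the appeal to Proposition~\ref{exactly} for the isomorphism $v\mapsto N_v$ is not quite the right citation: what you actually need is injectivity of $v\mapsto N_v$ on $(L\oplus\mathsf e_q)^\perp$ (immediate from $N_v(\zeta_{a_k})=v^{a_k}D'(\zeta_{a_k})$) together with $\dim(L\oplus\mathsf e_q)^\perp=d+1$, which follows directly from $\mathsf e_q\notin L$. Second, the formula $N_v=Q-G_{l_0}$ is the one step you have compressed; when written out, it amounts to evaluating both sides at each $\zeta_{a_k}$ and matching via the complementary-subset reindexing $T=\{1,\dots,n-1\}\setminus(\{k\}\cup S)$, with the case $k=l_0$ checked separately.
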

\begin{proof}
For odd $n$, 
note that
\bea
(P_{a_2a_4a_6\cdots a_{n-1},a_1a_3a_5\cdots a_{n-2}})_i^j&=&(-1)^{j+\frac{n-1}{2}}\frac{V(a_{n-2i},a_2,a_4,\dots,\hat{a}_{n+1-2j},\dots,a_{n-1})}{V(a_2,a_4,a_6,\dots,a_{n-1})}\nn\\&=&\frac{\prod\limits_{\begin{array}{cc}j'=1\\j'\ne j\end{array}}^{\frac{n-1}{2}}(\zeta_{a_{n-2i}}-\zeta_{a_{n+1-2j'}})}{\prod\limits_{\begin{array}{cc}j'=1\\j'\ne j\end{array}}^{\frac{n-1}{2}}(\zeta_{a_{n+1-2j}}-\zeta_{a_{n+1-2j'}})},
\eea
we need to prove that
\bea
&&\sum_{j=1}^{\frac{n-1}{2}}\frac{\prod\limits_{\begin{array}{cc}j'=1\\j'\ne j\end{array}}^{\frac{n-1}{2}}(\zeta_{a_{n-2i}}-\zeta_{a_{n+1-2j'}})}{\prod\limits_{\begin{array}{cc}j'=1\\j'\ne j\end{array}}^{\frac{n-1}{2}}(\zeta_{a_{n+1-2j}}-\zeta_{a_{n+1-2j'}})}
f^{(n)}_{a_1a_2\cdots\hat{a}_{n+1-2j}\cdots a_{n-1}}\nn\\
&=&f^{(n)}_{a_1\cdots\hat{a}_{n-2i}\cdots a_{n-1}}
\eea
where $\hat{a}_i$ means that $a_i$ is omitted.
In particular, the $a_{n-2i}$-th component is
\bea\label{eq:odd-1}
\sum_{j=1}^{\frac{n-1}{2}}\frac{\prod\limits_{\begin{array}{cc}j'=1\\j'\ne j\end{array}}^{\frac{n-1}{2}}(\zeta_{a_{n-2i}}-\zeta_{a_{n+1-2j'}})}{\prod\limits_{\begin{array}{cc}j'=1\\j'\ne j\end{array}}^{\frac{n-1}{2}}(\zeta_{a_{n+1-2j}}-\zeta_{a_{n+1-2j'}})}
\sum_{\begin{array}{cc}1\le i_1<i_2<\cdots<i_{{\frac{n-1}{2}}}\le n-1\\i_l\ne n+1-2j,n-2i,\forall l\end{array}}\frac{1}{\prod\limits_{l=1}^{{\frac{n-1}{2}}}(\zeta_{a_{n-2i}}-\zeta_{a_{i_l}})}=0.
\eea
The $a_{n-2i'}$-th component ($i'\ne i$) is
 \bea\label{eq:odd-2}
&&\sum_{j=1}^{\frac{n-1}{2}}\frac{\prod\limits_{\begin{array}{cc}j'=1\\j'\ne j\end{array}}^{\frac{n-1}{2}}(\zeta_{a_{n-2i}}-\zeta_{a_{n+1-2j'}})}{\prod\limits_{\begin{array}{cc}j'=1\\j'\ne j\end{array}}^{\frac{n-1}{2}}(\zeta_{a_{n+1-2j}}-\zeta_{a_{n+1-2j'}})}
\sum_{\begin{array}{cc}1\le i_1<i_2<\cdots<i_{{\frac{n-1}{2}}}\le n-1\\i_l\ne n+1-2j,n-2i',\forall l\end{array}}\frac{1}{\prod\limits_{l=1}^{{\frac{n-1}{2}}}(\zeta_{a_{n-2i'}}-\zeta_{a_{i_l}})}\nn\\
&=&\sum_{\begin{array}{cc}1\le i_1<i_2<\cdots<i_{{\frac{n-1}{2}}}\le n-1\\i_l\ne n-2i,n-2i',\forall l\end{array}}\frac{1}{\prod\limits_{l=1}^{{\frac{n-1}{2}}}(\zeta_{a_{n-2i'}}-\zeta_{a_{i_l}})}.
\eea
The $a_{n+1-2j_0}$-th component is
 \bea\label{eq:odd-3}
&&\sum_{\begin{array}{cc}j=1\\j\ne j_0\end{array}}^{\frac{n-1}{2}}\frac{\prod\limits_{\begin{array}{cc}j'=1\\j'\ne j\end{array}}^{\frac{n-1}{2}}(\zeta_{a_{n-2i}}-\zeta_{a_{n+1-2j'}})}{\prod\limits_{\begin{array}{cc}j'=1\\j'\ne j\end{array}}^{\frac{n-1}{2}}(\zeta_{a_{n+1-2j}}-\zeta_{a_{n+1-2j'}})}
\sum_{\begin{array}{cc}1\le i_1<i_2<\cdots<i_{{\frac{n-1}{2}}}\le n-1\\i_l\ne n+1-2j,n+1-2j_0,\forall l\end{array}}\frac{1}{\prod\limits_{l=1}^{{\frac{n-1}{2}}}(\zeta_{a_{n+1-2j_0}}-\zeta_{a_{i_l}})}\nn\\
&=&\sum_{\begin{array}{cc}1\le i_1<i_2<\cdots<i_{{\frac{n-1}{2}}}\le n-1\\i_l\ne n-2i,n+1-2j_0,\forall l\end{array}}\frac{1}{\prod\limits_{l=1}^{{\frac{n-1}{2}}}(\zeta_{a_{n+1-2j_0}}-\zeta_{a_{i_l}})}.
\eea
We can use the fact 
\bea
\sum_{j=1}^{\frac{n-1}{2}}\frac{1}{\prod\limits_{\begin{array}{cc}j'=1\\j'\ne j\end{array}}^{\frac{n-1}{2}}(\zeta_{a_{n+1-2j}}-\zeta_{a_{n+1-2j'}})}=0
\eea
to prove \eqref{eq:odd-1}.
We can use the trick
\bea
\frac{\zeta_{a_{n-2i}}-\zeta_{a_{n+1-2j'}}}{(\zeta_{a_{n-2i'}}-\zeta_{a_{n-2i}})(\zeta_{a_{n-2i'}}-\zeta_{a_{n+1-2j'}})}=\frac{1}{\zeta_{a_{n-2i'}}-\zeta_{a_{n-2i}}}-\frac{1}{\zeta_{a_{n-2i'}}-\zeta_{a_{n+1-2j'}}}
\eea
to prove \eqref{eq:odd-2}.
The proof for \eqref{eq:odd-3} is straightforward. 

For even $n$, 
note that
\bea
(P_{a_1a_3a_5\cdots a_{n-3},a_2a_4a_6\cdots a_{n-2}a_{n-1}})_i^j&=&\left\{\begin{array}{ll}(-1)^{j+\frac{n}{2}-1}\frac{V(a_{n+2-2i},a_1,a_3,\dots,\hat{a}_{n-1-2j},\dots,a_{n-3})}{V(a_1,a_3,a_5,\dots,a_{n-3})}&\text{if }i>1\\
(-1)^{j+\frac{n}{2}-1}\frac{V(a_{n-1},a_1,a_3,\dots,\hat{a}_{n-1-2j},\dots,a_{n-3})}{V(a_1,a_3,a_5,\dots,a_{n-3})}&\text{if }i=1
\end{array}\right.\nn\\
&=&\left\{\begin{array}{ll}\frac{\prod\limits_{\begin{array}{cc}j'=1\\j'\ne j\end{array}}^{\frac{n}{2}-1}(\zeta_{a_{n+2-2i}}-\zeta_{a_{n-1-2j'}})}{\prod\limits_{\begin{array}{cc}j'=1\\j'\ne j\end{array}}^{\frac{n}{2}-1}(\zeta_{a_{n-1-2j}}-\zeta_{a_{n-1-2j'}})}&\text{if }i>1\\
\frac{\prod\limits_{\begin{array}{cc}j'=1\\j'\ne j\end{array}}^{\frac{n}{2}-1}(\zeta_{a_{n-1}}-\zeta_{a_{n-1-2j'}})}{\prod\limits_{\begin{array}{cc}j'=1\\j'\ne j\end{array}}^{\frac{n}{2}-1}(\zeta_{a_{n-1-2j}}-\zeta_{a_{n-1-2j'}})}&\text{if }i=1
\end{array}\right.,
\eea
we need to prove that for $i>1$,
\bea
&&\sum_{j=1}^{\frac{n}{2}-1}\frac{\prod\limits_{\begin{array}{cc}j'=1\\j'\ne j\end{array}}^{\frac{n}{2}-1}(\zeta_{a_{n+2-2i}}-\zeta_{a_{n-1-2j'}})}{\prod\limits_{\begin{array}{cc}j'=1\\j'\ne j\end{array}}^{\frac{n}{2}-1}(\zeta_{a_{n-1-2j}}-\zeta_{a_{n-1-2j'}})}
f^{(n)}_{a_1\cdots\hat{a}_{n-1-2j}\cdots a_{n-1}}\nn\\
&=&f^{(n)}_{a_1\cdots\hat{a}_{n+2-2i}\cdots a_{n-1}},
\eea
and 
\bea
&&\sum_{j=1}^{\frac{n}{2}-1}\frac{\prod\limits_{\begin{array}{cc}j'=1\\j'\ne j\end{array}}^{\frac{n}{2}-1}(\zeta_{a_{n-1}}-\zeta_{a_{n-1-2j'}})}{\prod\limits_{\begin{array}{cc}j'=1\\j'\ne j\end{array}}^{\frac{n}{2}-1}(\zeta_{a_{n-1-2j}}-\zeta_{a_{n-1-2j'}})}
f^{(n)}_{a_1\cdots\hat{a}_{n-1-2j}\cdots a_{n-1}}\nn\\
&=&f^{(n)}_{a_1\cdots a_{n-2}}
\eea
where $\hat{a}_i$ means that $a_i$ is omitted. The proof is similar to that for odd $n$.
\end{proof}

The $n$-gon equation involves two sequences of Pachner moves. Each sequence is a path from the initial triangulation to the final triangulation of the $n$-gon. The number of $(n-3)$-simplices in each step of the two paths does not change for odd $n$ and increases by 1 for even $n$.

 We can extend the matrices we constructed for Pachner moves in \Sec{sec:matrices} by adding a row and a column so that the entry is 1 at the intersection and 0 elsewhere corresponding to each fixed $(n-3)$-simplex in the Pachner moves.

By Proposition \ref{matrix}, the extended matrices map the vectors associated to $(n-3)$-simplices in the triangulation in each step to the vectors associated to $(n-3)$-simplices in the triangulation in the next step. So the products of matrices on both sides of the $n$-gon equation act on the vectors associated to $(n-3)$-simplices in the initial triangulation in the same way.

\begin{proposition}\label{f-polygon}
The vectors $f^{(n)}_{1\cdots\hat{i}\cdots\hat{j}\cdots n}$ associated to $(n-3)$-simplices in the initial triangulation of the $n$-gon \eqref{eq:pairs} are linearly independent, that is, span an $\dfrac{\floor{\frac{n-1}{2}}(\floor{\frac{n-1}{2}}+1)}{2}$-dimensional linear space.
\end{proposition}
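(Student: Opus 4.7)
The plan is to prove linear independence by exhibiting an explicit ``staircase'' structure in the matrix whose rows are the vectors $f^{(n)}_{[n]\setminus\{i,j\}}$ --- indexed by pairs $(i,j)$ from the initial triangulation \eqref{eq:pairs} --- and whose columns are indexed by the vertex set $[n]$. The guiding observation is that $(f^{(n)}_{[n]\setminus\{i,j\}})^k$ vanishes precisely when $k\in\{i,j\}$, so the zero-pattern of each row is completely determined by its pair. This makes the shape of the pair-set the main combinatorial object to exploit.

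First I would list the pairs by the parameters $(k,l)$ from \eqref{eq:pairs} lexicographically, respecting the triangular shape $1\le l\le k\le\floor{(n-1)/2}$. Any subfamily of simplices sharing a common omitted vertex $q$ is already known to be linearly independent by Proposition \ref{linear-independent}. To stitch such subfamilies together, I would proceed by pivoting: for each pair $(i,j)$ in the chosen order, exhibit a coordinate $k(i,j)\in[n]$ at which $(f^{(n)}_{[n]\setminus\{i,j\}})^{k(i,j)}$ is a nonzero rational function in the $\zeta_\ell$'s and at which every previously listed vector vanishes. The condition ``earlier vectors vanish at $k(i,j)$'' should follow structurally from the shape of \eqref{eq:pairs}; the natural candidates for the pivot are $k(i,j)=i$ or $k(i,j)=j$, since those are the indices where the newly added row has fresh zeros.

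The main obstacle will be to verify that the pivot entries are nonzero rational functions. Each entry is a sum of reciprocals of products of differences $\zeta_a-\zeta_b$ coming from the representation $f^{(n)}=\sum g^{(\floor{n/2}+3)}$, and non-cancellation is not automatic. A clean route is to read off a leading term in a suitable limit (say $\zeta_i\to\infty$), or to specialize the $\zeta$'s to distinct rational values and invoke the hypothesis $\mathrm{char}\,F=0$, much as in the proof of Proposition \ref{linear-independent}. If the pivoting becomes too delicate for some ``late'' pair, I would fall back on induction on $n$ within each parity class: the initial triangulations for $n$ and $n-2$ are nested after an obvious relabeling of vertices, and the ``new'' simplices at level $n$ (those whose pair involves one of the two freshly added vertices) can be shown linearly independent from the inductive basis by examining their supports, completing the induction step.
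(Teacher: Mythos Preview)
Your staircase plan runs into a concrete combinatorial obstruction. An earlier row $f^{(n)}_{[n]\setminus\{i',j'\}}$ vanishes in column $k$ exactly when $k\in\{i',j'\}$, so a pivot column $k(i,j)$ that kills all earlier rows must lie in $\bigcap_{(i',j')\ \text{earlier}}\{i',j'\}$. The initial pairs in \eqref{eq:pairs} contain mutually disjoint ones (already for $n=7$: $(6,7)$ and $(4,5)$), so this intersection becomes empty after only a few steps, and no ordering produces a zero-pattern staircase of the required length. Your suggested pivots $k(i,j)\in\{i,j\}$ point the wrong way: those are the columns where the \emph{current} row vanishes, not the earlier ones. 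For comparison, the paper does not attempt a staircase at all; it writes $\sum\lambda_{i,j}f^{(n)}_{[n]\setminus\{i,j\}}=0$, fixes a coordinate $\mathsf e_{r}$, and then isolates the coefficient of a single partial-fraction summand $\prod_m(\zeta_r-\zeta_{i_m})^{-1}$ inside that coordinate to obtain linear relations among the $\lambda$'s.

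There is a deeper issue that defeats both your route and the paper's argument taken at face value. By Proposition~\ref{f} every $f^{(n)}_{[n]\setminus\{i,j\}}$ lies in $L^{\perp}$, where $L$ is the row span of the matrix~\eqref{eq:M}; hence the span of all these vectors has $F$-dimension at most $n-\lfloor n/2\rfloor=\lceil n/2\rceil$. For $n=7$ this bound is $4$, while the initial triangulation has $6$ simplices, so the six vectors are necessarily $F$-linearly \emph{dependent} and cannot span a $6$-dimensional space. The paper's ``coefficient of the term'' extraction tacitly treats the distinct rational functions $\prod_m(\zeta_r-\zeta_{i_m})^{-1}$ as separable, which is legitimate only when the coefficients $\lambda_{i,j}$ are drawn from a subfield over which those functions are independent, not from $F$ itself. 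Thus for $n\ge 7$ the statement as written (linear independence over $F$) is false, and neither your approach nor the paper's can establish it without reinterpreting what is being claimed.
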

\begin{proof}
By definition, 
\bea
f^{(n)}_{1\cdots\hat{i}\cdots\hat{j}\cdots n}=\sum_{\begin{array}{cc}r=1\\r\ne i,j\end{array}}^n\mathsf{e}_r\sum_{\begin{array}{cc}1\le i_1<\cdots<i_{\floor{\frac{n}{2}}}\le n\\i_m\ne i,j,r,\forall m\end{array}}\frac{1}{\prod\limits_{m=1}^{\floor{\frac{n}{2}}}(\zeta_{r}-\zeta_{{i_m}})}
\eea
where $\mathsf{e}_r$ ($r=1,2,\dots,n$) are the standard basis of $F^n$. 
We need to prove that if 
\bea
\sum_{k=1}^{\floor{\frac{n-1}{2}}}\sum_{l=1}^k\lambda_{n+1-2k,n+2-2l} f^{(n)}_{1\cdots\widehat{n+1-2k}\cdots\widehat{n+2-2l}\cdots n}=0,
\eea
then $\lambda_{n+1-2k,n+2-2l}=0$ for $1\le l\le k\le \floor{\frac{n-1}{2}}$.
For each fixed $l_0$, the coefficient of $\mathsf{e}_{n+2-2l_0}$ is
\bea
\sum_{k=1}^{\floor{\frac{n-1}{2}}}\sum_{\begin{array}{cc}l=1\\l\ne l_0\end{array}}^k\lambda_{n+1-2k,n+2-2l}\sum_{\begin{array}{cc}1\le i_1<\cdots<i_{\floor{\frac{n}{2}}}\le n\\i_m\ne n+1-2k,n+2-2l,n+2-2l_0,\forall m\end{array}}\frac{1}{\prod\limits_{m=1}^{\floor{\frac{n}{2}}}(\zeta_{n+2-2l_0}-\zeta_{{i_m}})}.
\eea
For each fixed $k_0$, the coefficient of the term corresponding to $\{i_1,\dots,i_{\floor{\frac{n}{2}}}\}=\{1,2,\dots,n\}\setminus(\{n+2-2l\mid l=1,2,\dots,\floor{\frac{n-1}{2}}\}\cup\{n+1-2k_0\})$ in the coefficient of $\mathsf{e}_{n+2-2l_0}$ is 
\bea\label{eq:lambda-1}
\sum_{\begin{array}{cc}l=1\\l\ne l_0\end{array}}^{k_0}\lambda_{n+1-2k_0,n+2-2l}=0.
\eea
Therefore, 
\bea
\lambda_{n+1-2k_0,n+2-2l_0}=\sum_{l=1}^{k_0}\lambda_{n+1-2k_0,n+2-2l}
\eea
is independent of $l_0$. Again, by \eqref{eq:lambda-1}, we have $\lambda_{n+1-2k_0,n+2-2l_0}=0$ for $k_0\ge 2$. We have used the assumption that the characteristic of $F$ is zero.

Similarly, for each fixed $k_0$, the coefficient of $\mathsf{e}_{n+1-2k_0}$ is
\bea
\sum_{l=1}^{\floor{\frac{n-1}{2}}}\sum_{\begin{array}{cc}k=l\\k\ne k_0\end{array}}^{\floor{\frac{n-1}{2}}}\lambda_{n+1-2k,n+2-2l}\sum_{\begin{array}{cc}1\le i_1<\cdots<i_{\floor{\frac{n}{2}}}\le n\\i_m\ne n+1-2k,n+2-2l,n+1-2k_0,\forall m\end{array}}\frac{1}{\prod\limits_{m=1}^{\floor{\frac{n}{2}}}(\zeta_{n+1-2k_0}-\zeta_{{i_m}})}.
\eea
For each fixed $l_0$, the coefficient of the term corresponding to $\{i_1,\dots,i_{\floor{\frac{n}{2}}}\}=\{1,2,\dots,n\}\setminus(\{n+1-2k\mid k=1,2,\dots,\floor{\frac{n-1}{2}}\}\cup\{n+2-2l_0\})$ in the coefficient of $\mathsf{e}_{n+1-2k_0}$ is 
\bea\label{eq:lambda-2}
\sum_{\begin{array}{cc}k=l_0\\k\ne k_0\end{array}}^{\floor{\frac{n-1}{2}}}\lambda_{n+1-2k,n+2-2l_0}=0.
\eea
Therefore, 
\bea
\lambda_{n+1-2k_0,n+2-2l_0}=\sum_{k=l_0}^{\floor{\frac{n-1}{2}}}\lambda_{n+1-2k,n+2-2l_0}
\eea
is independent of $k_0$. Again, by \eqref{eq:lambda-2}, we have $\lambda_{n+1-2k_0,n+2-2l_0}=0$ for $l_0\le \floor{\frac{n-1}{2}}-1$. We have used the assumption that the characteristic of $F$ is zero.

We combine these results and conclude that $\lambda_{n+1-2k,n+2-2l}=0$ for $1\le l\le k\le \floor{\frac{n-1}{2}}$.
\end{proof}

 By Proposition \ref{f-polygon}, the vectors $f^{(n)}_{1\cdots\hat{i}\cdots\hat{j}\cdots n}$ associated to $(n-3)$-simplices in the initial triangulation are linearly independent, so we conclude that the products of matrices on both sides of the $n$-gon equation are equal. Hence we have proved Theorem \ref{main}.
 
 \section{Example: solving the pentagon equation}\label{sec:pentagon}
 
\subsection{Construction of vectors associated to triangles and matrices associated to Pachner 2-2 moves}\label{sec:Pachner-2-2}
In this subsection, we construct vectors associated to triangles and matrices associated to Pachner 2-2 moves. This construction is inspired by \cite{KS13}.

Given 3 pairwise distinct variables $\zeta_i$ ($i=1,2,3$) and a triangle 123, we can construct a vector $(f_1,f_2,f_3)$ such that 
\bea
f_1+f_2+f_3=0\text{ and }f_1\zeta_1+f_2\zeta_2+f_3\zeta_3=0.
\eea
For example,
\bea
(f_1,f_2,f_3)=(\frac{1}{(\zeta_1-\zeta_2)(\zeta_1-\zeta_3)},\frac{1}{(\zeta_2-\zeta_1)(\zeta_2-\zeta_3)},\frac{1}{(\zeta_3-\zeta_1)(\zeta_3-\zeta_2)}).
\eea

For the quadrilateral 1234, the Pachner 2-2 move replaces the triangles 123 and 134 with the triangles 124 and 234.
We will construct a $2\times2$ matrix $P_{24,13}$ for this Pachner move such that
\bea
&&P_{24,13}\left(\begin{array}{cccc}
f_1(123)&f_2(123)&f_3(123)&0\\
f_1(134)&0&f_2(134)&f_3(134)
\end{array}\right)\\
&=&\left(\begin{array}{cccc}
f_1(124)&f_2(124)&0&f_3(124)\\
0&f_1(234)&f_2(234)&f_3(234)
\end{array}\right).\nn
\eea

We solve this matrix equation and get
\bea\label{eq:P-2-2}
P_{24,13}=\left(\begin{array}{cc}
\frac{\zeta_2-\zeta_3}{\zeta_2-\zeta_4}&\frac{\zeta_3-\zeta_4}{\zeta_2-\zeta_4}\\
\frac{\zeta_2-\zeta_1}{\zeta_2-\zeta_4}&\frac{\zeta_1-\zeta_4}{\zeta_2-\zeta_4}\end{array}\right).
\eea
Note that the matrix $P_{24,13}$ is invertible and the sum of the entries of each row of the matrix $P_{24,13}$ is 1.
This gives a solution to the pentagon equation which is essentially the one in \cite{ManturovWanMay2023} after taking the transpose.

\subsection{A solution to the pentagon equation}
In this subsection, we check that \eqref{eq:P-2-2} gives a solution to the pentagon equation.

The pentagon equation is 
\bea\label{eq:pentagon}
d_{35,14}d_{25,13}=d_{24,13}d_{25,14}d_{35,24}
\eea
where $d_{kl,ij}$ ($1\le i<j\le 5$, $1\le k<l\le 5$, $\{i,j\}\cap\{k,l\}=\emptyset$) corresponds to the Pachner 2-2 move in a quadrilateral $ikjl$ that replaces the triangles $ijk$, $ijl$ with the triangles $ikl$, $jkl$ (the order of vertices can be rearranged from small to large).

We will construct matrices $a_{kl,ij}$ for each Pachner 2-3 move satisfying the pentagon equation \eqref{eq:pentagon}.

For the quadrilateral 1234, the Pachner 2-2 move replaces the triangles 123 and 134 with the triangles 124 and 234.
We can construct a $2\times2$ matrix $P_{24,13}$ \eqref{eq:P-2-2} for this Pachner move as we did in \Sec{sec:Pachner-2-2}.

We can extend this matrix by adding a row and a column so that the entry is 1 at the intersection and 0 elsewhere corresponding to each fixed tetrahedron in the Pachner 2-2 moves.
The extended matrix is denoted as $a_{24,13}$.

\begin{figure}[h]
\centering\includegraphics[width = 0.9\textwidth]{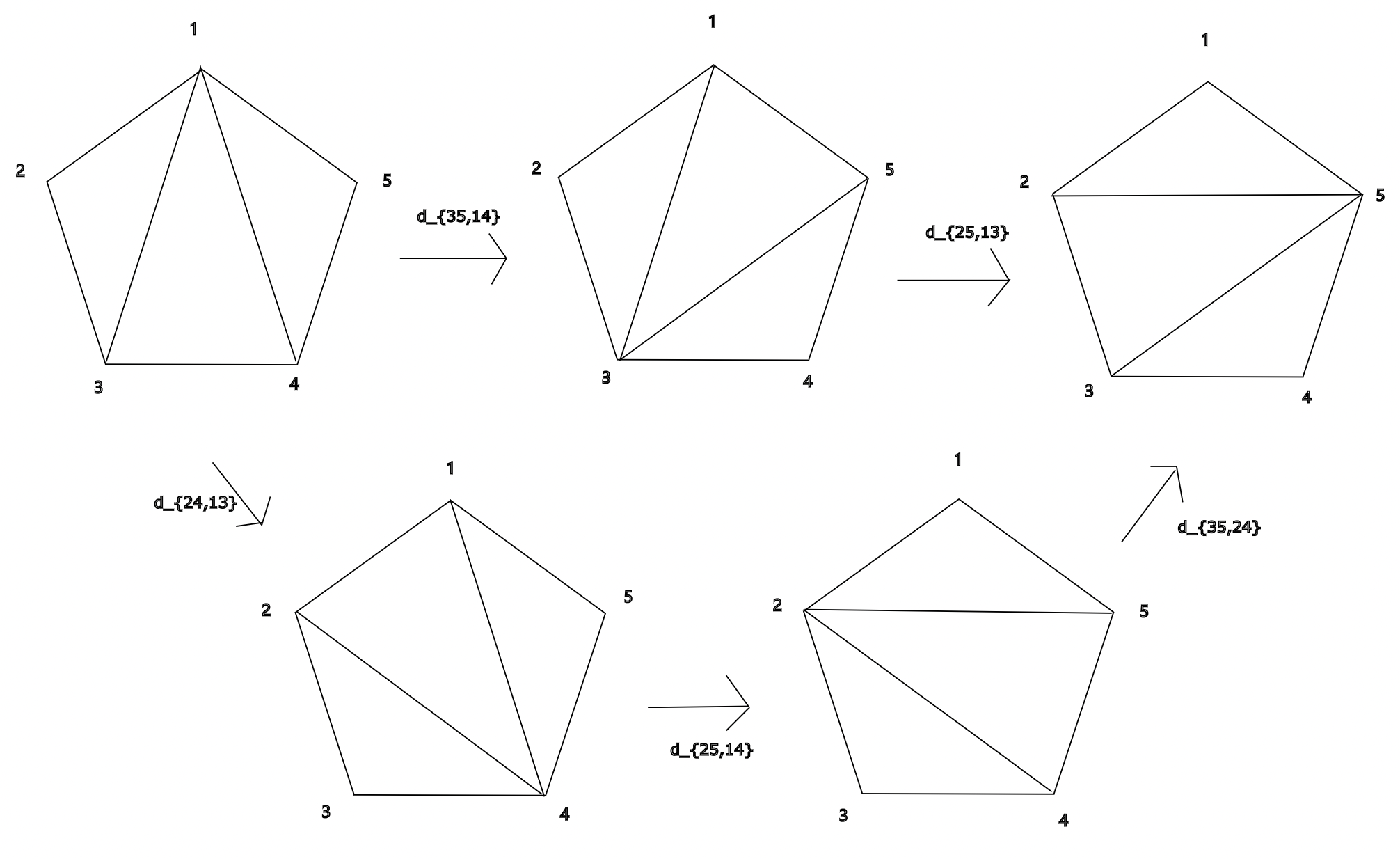}
\caption{Triangulation graph for the pentagon.}\label{fig:pentagon}
\end{figure}

The pentagon equation \eqref{eq:pentagon} involves 5 Pachner 2-2 moves: (1)$\to$(2)$\to$(5) and (1)$\to$(3)$\to$(4)$\to$(5) where (1)-(5) are triangulations of a pentagon as follows (see \Fig{fig:pentagon}):
\begin{enumerate}
\item
123, 134, 145,
\item
123, 135, 345, 
\item
124, 234, 145,
\item
125, 234, 245,
\item
125, 235, 345.
\end{enumerate}
We can check that the matrices $a_{kl,ij}$ for these Pachner 2-2 moves satisfy the pentagon equation \eqref{eq:pentagon}, namely
\bea
&&\left(\begin{array}{ccc}\frac{\zeta_3-\zeta_2}{\zeta_5-\zeta_2}&\frac{\zeta_1-\zeta_2}{\zeta_5-\zeta_2}&\\\frac{\zeta_3-\zeta_5}{\zeta_2-\zeta_5}&\frac{\zeta_1-\zeta_5}{\zeta_2-\zeta_5}&\\&&1\end{array}\right)\left(\begin{array}{ccc}1&&\\&\frac{\zeta_4-\zeta_3}{\zeta_5-\zeta_3}&\frac{\zeta_1-\zeta_3}{\zeta_5-\zeta_3}\\&\frac{\zeta_4-\zeta_5}{\zeta_3-\zeta_5}&\frac{\zeta_1-\zeta_5}{\zeta_3-\zeta_5}\end{array}\right)\nn\\
&=&
\left(\begin{array}{ccc}1&&\\&\frac{\zeta_3-\zeta_4}{\zeta_3-\zeta_5}&\frac{\zeta_4-\zeta_5}{\zeta_3-\zeta_5}\\&\frac{\zeta_3-\zeta_2}{\zeta_3-\zeta_5}&\frac{\zeta_2-\zeta_5}{\zeta_3-\zeta_5}\end{array}\right)
\left(\begin{array}{ccc}\frac{\zeta_2-\zeta_4}{\zeta_2-\zeta_5}&&\frac{\zeta_4-\zeta_5}{\zeta_2-\zeta_5}\\&1&\\\frac{\zeta_2-\zeta_1}{\zeta_2-\zeta_5}&&\frac{\zeta_1-\zeta_5}{\zeta_2-\zeta_5}\end{array}\right)
\left(\begin{array}{ccc}\frac{\zeta_2-\zeta_3}{\zeta_2-\zeta_4}&\frac{\zeta_3-\zeta_4}{\zeta_2-\zeta_4}&\\\frac{\zeta_2-\zeta_1}{\zeta_2-\zeta_4}&\frac{\zeta_1-\zeta_4}{\zeta_2-\zeta_4}&\\&&1\end{array}\right).
\eea
Note that the matrices act from the left, so the order in the products is reversed in comparison to \eqref{eq:pentagon}.

\section{Example: solving the hexagon equation}\label{sec:hexagon}
\subsection{Construction of vectors associated to tetrahedra and matrices associated to Pachner 2-3 moves}\label{sec:Pachner}
In this subsection, we construct vectors associated to tetrahedra and matrices associated to Pachner 2-3 moves. This construction is inspired by \cite{KS13}.

Given 4 pairwise distinct variables $\zeta_i$ ($i=1,2,3,4$) and a tetrahedron 1234, we can construct a vector $(f_1,f_2,f_3,f_4)$ such that 
\bea
f_1\zeta_1^m+f_2\zeta_2^m+f_3\zeta_3^m+f_4\zeta_4^m=0\text{ for }m=0,1,2.
\eea
For example,
\bea
(f_1,f_2,f_3,f_4)&=&(\frac{1}{(\zeta_1-\zeta_2)(\zeta_1-\zeta_3)(\zeta_1-\zeta_4)},\frac{1}{(\zeta_2-\zeta_1)(\zeta_2-\zeta_3)(\zeta_2-\zeta_4)},\\
&&\frac{1}{(\zeta_3-\zeta_1)(\zeta_3-\zeta_2)(\zeta_3-\zeta_4)},\frac{1}{(\zeta_4-\zeta_1)(\zeta_4-\zeta_2)(\zeta_4-\zeta_3)}).\nn
\eea

For the polyhedron 12345, the Pachner 2-3 move (see \Fig{fig:Pachner}) replaces the tetrahedra 1245 and 2345 with the tetrahedra 1234, 1235, and 1345.
We will construct a $3\times2$ matrix $P_{13,245}$ for this Pachner move such that
\bea
&&P_{13,245}\left(\begin{array}{ccccc}
f_1(1245)&f_2(1245)&0&f_3(1245)&f_4(1245)\\
0&f_1(2345)&f_2(2345)&f_3(2345)&f_4(2345)
\end{array}\right)\\
&=&\left(\begin{array}{ccccc}f_1(1234)&f_2(1234)&f_3(1234)&f_4(1234)&0\\
f_1(1235)&f_2(1235)&f_3(1235)&0&f_4(1235)\\
f_1(1345)&0&f_2(1345)&f_3(1345)&f_4(1345)
\end{array}\right).\nn
\eea

We solve this matrix equation and get
\bea\label{eq:P}
P_{13,245}=\left(\begin{array}{ccc}\frac{\zeta_1-\zeta_5}{\zeta_1-\zeta_3}&\frac{\zeta_5-\zeta_3}{\zeta_1-\zeta_3}\\
\frac{\zeta_1-\zeta_4}{\zeta_1-\zeta_3}&\frac{\zeta_4-\zeta_3}{\zeta_1-\zeta_3}\\
\frac{\zeta_1-\zeta_2}{\zeta_1-\zeta_3}&\frac{\zeta_2-\zeta_3}{\zeta_1-\zeta_3}\end{array}\right).
\eea
Note that the sum of the entries of each row of the matrix $P_{13,245}$ is 1.

\subsection{A solution to the hexagon equation}
In this subsection, we check that \eqref{eq:P} gives a solution to the hexagon equation.

The hexagon equation is
\bea\label{eq:hexagon}
d_{(46,125)}d_{(36,124)}d_{(35,246)}=d_{(35,124)}d_{(36,125)}d_{(46,135)}
\eea
where $d_{(ij,klm)}$ ($1\le i<j\le 6$, $1\le k<l<m\le 6$, $\{i,j\}\cap\{k,l,m\}=\emptyset$) corresponds to the Pachner 2-3 move in a polyhedron $ijklm$ with 6 faces where each face is a triangle, $ij$ is the vertical axis, and $klm$ is the horizontal triangle.
It replaces the tetrahedra $iklm$, $jklm$ with the tetrahedra $ijkl$, $ijkm$, $ijlm$ (the order of vertices can be rearranged from small to large).

We will construct matrices $a_{ij,klm}$ for each Pachner 2-3 move satisfying the hexagon equation \eqref{eq:hexagon}.

For the polyhedron 12345, the Pachner 2-3 move (see \Fig{fig:Pachner}) replaces the tetrahedra 1245 and 2345 with the tetrahedra 1234, 1235, and 1345.
We can construct a $3\times2$ matrix $P_{13,245}$ \eqref{eq:P} for this Pachner move as we did in \Sec{sec:Pachner}.

We can extend this matrix by adding a row and a column so that the entry is 1 at the intersection and 0 elsewhere corresponding to each fixed tetrahedron in the Pachner 2-3 moves.
The extended matrix is denoted as $a_{13,245}$.

\begin{figure}[h]
\centering\includegraphics[width = 0.9\textwidth]{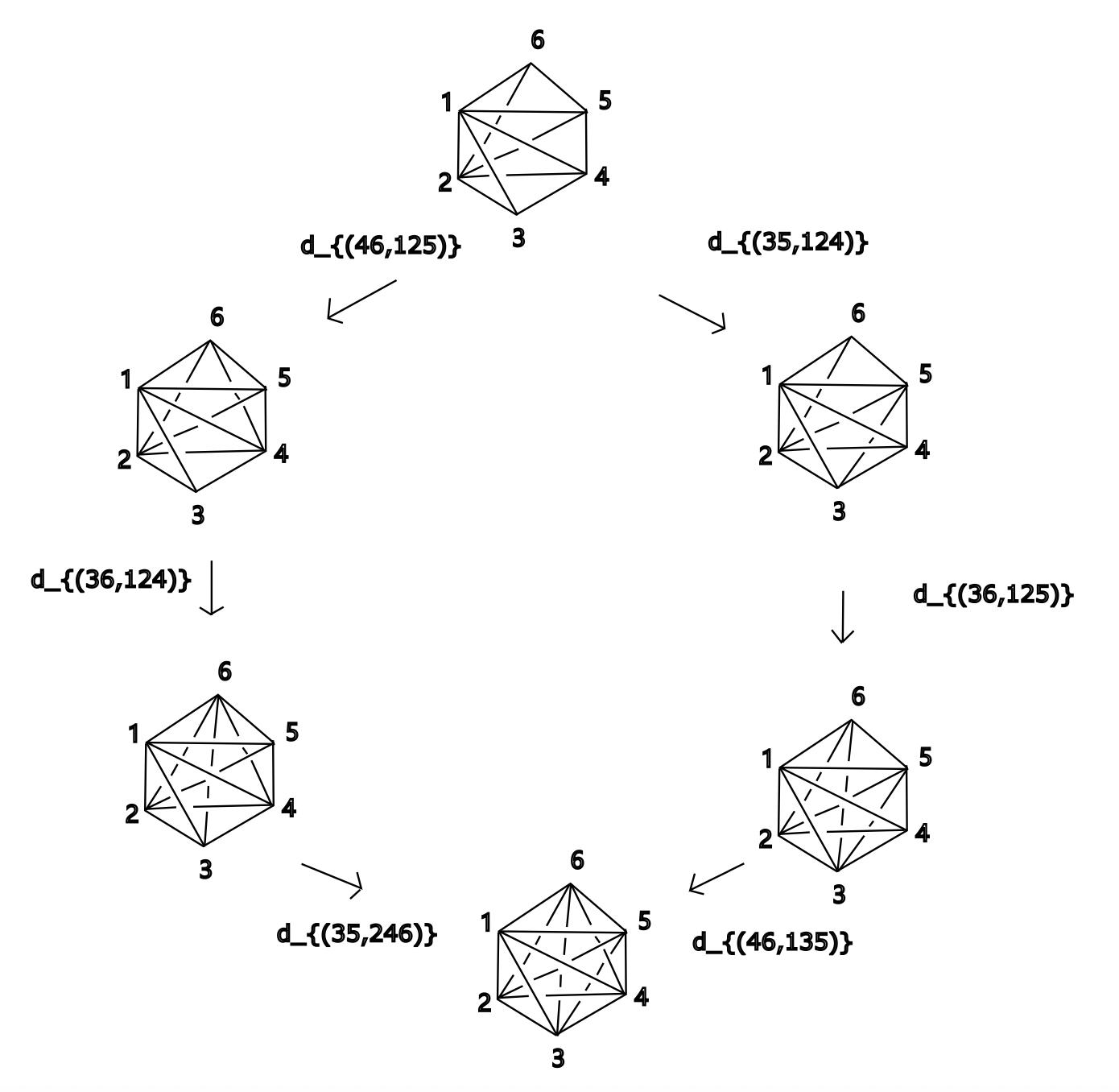}
\caption{Triangulation graph for the shifted octahedron.}\label{fig:octahedron}
\end{figure}

The hexagon equation \eqref{eq:hexagon} involves 6 Pachner 2-3 moves: (1)$\to$(2)$\to$(4)$\to$(6) and (1)$\to$(3)$\to$(5)$\to$(6) where (1)-(6) are triangulations of a shifted octahedron as follows (see \Fig{fig:octahedron}):
\begin{enumerate}
\item
1234, 1245, 1256,
\item
1234, 1246, 1456, 2456,
\item
1235, 1256, 1345, 2345,
\item
1236, 1346, 1456, 2346, 2456,
\item
1236, 1345, 1356, 2345, 2356,
\item
1236, 1346, 1456, 2345, 2356, 3456.
\end{enumerate}
We can check that the matrices $a_{ij,klm}$ for these Pachner 2-3 moves satisfy the hexagon equation \eqref{eq:hexagon}, namely
\bea
&&\left(\begin{array}{cccccc}1&0&0&0&0\\0&1&0&0&0\\0&0&1&0&0\\0&0&0&\frac{\zeta_3-\zeta_6}{\zeta_3-\zeta_5}&\frac{\zeta_6-\zeta_5}{\zeta_3-\zeta_5}\\
0&0&0&\frac{\zeta_3-\zeta_4}{\zeta_3-\zeta_5}&\frac{\zeta_4-\zeta_5}{\zeta_3-\zeta_5}\\
0&0&0&\frac{\zeta_3-\zeta_2}{\zeta_3-\zeta_5}&\frac{\zeta_2-\zeta_5}{\zeta_3-\zeta_5}\end{array}\right)
\left(\begin{array}{ccccc}\frac{\zeta_3-\zeta_4}{\zeta_3-\zeta_6}&\frac{\zeta_4-\zeta_6}{\zeta_3-\zeta_6}&0&0\\
\frac{\zeta_3-\zeta_2}{\zeta_3-\zeta_6}&\frac{\zeta_2-\zeta_6}{\zeta_3-\zeta_6}&0&0\\
0&0&1&0\\
\frac{\zeta_3-\zeta_1}{\zeta_3-\zeta_6}&\frac{\zeta_1-\zeta_6}{\zeta_3-\zeta_6}&0&0\\
0&0&0&1\end{array}\right)
\left(\begin{array}{cccc}1&0&0\\0&\frac{\zeta_4-\zeta_5}{\zeta_4-\zeta_6}&\frac{\zeta_5-\zeta_6}{\zeta_4-\zeta_6}\\
0&\frac{\zeta_4-\zeta_2}{\zeta_4-\zeta_6}&\frac{\zeta_2-\zeta_6}{\zeta_4-\zeta_6}\\
0&\frac{\zeta_4-\zeta_1}{\zeta_4-\zeta_6}&\frac{\zeta_1-\zeta_6}{\zeta_4-\zeta_6}\end{array}\right)\\
&=&\left(\begin{array}{cccccc}1&0&0&0&0\\0&\frac{\zeta_4-\zeta_5}{\zeta_4-\zeta_6}&\frac{\zeta_5-\zeta_6}{\zeta_4-\zeta_6}&0&0\\
0&\frac{\zeta_4-\zeta_3}{\zeta_4-\zeta_6}&\frac{\zeta_3-\zeta_6}{\zeta_4-\zeta_6}&0&0\\
0&0&0&1&0\\
0&0&0&0&1\\
0&\frac{\zeta_4-\zeta_1}{\zeta_4-\zeta_6}&\frac{\zeta_1-\zeta_6}{\zeta_4-\zeta_6}&0&0\end{array}\right)
\left(\begin{array}{ccccc}\frac{\zeta_3-\zeta_5}{\zeta_3-\zeta_6}&\frac{\zeta_5-\zeta_6}{\zeta_3-\zeta_6}&0&0\\
0&0&1&0\\
\frac{\zeta_3-\zeta_2}{\zeta_3-\zeta_6}&\frac{\zeta_2-\zeta_6}{\zeta_3-\zeta_6}&0&0\\
0&0&0&1\\
\frac{\zeta_3-\zeta_1}{\zeta_3-\zeta_6}&\frac{\zeta_1-\zeta_6}{\zeta_3-\zeta_6}&0&0\end{array}\right)
\left(\begin{array}{cccc}\frac{\zeta_3-\zeta_4}{\zeta_3-\zeta_5}&\frac{\zeta_4-\zeta_5}{\zeta_3-\zeta_5}&0\\
0&0&1\\
\frac{\zeta_3-\zeta_2}{\zeta_3-\zeta_5}&\frac{\zeta_2-\zeta_5}{\zeta_3-\zeta_5}&0\\
\frac{\zeta_3-\zeta_1}{\zeta_3-\zeta_5}&\frac{\zeta_1-\zeta_5}{\zeta_3-\zeta_5}&0\end{array}\right).\nn
\eea
Note that the matrices act from the left, so the order in the products is reversed in comparison to \eqref{eq:hexagon}.
A different solution to the hexagon equation was given in \cite{KS17}.

\section{Example: solving the heptagon equation}\label{sec:heptagon}
\subsection{Construction of vectors associated to 4-simplices and matrices associated to Pachner 3-3 moves}
In this subsection, we construct vectors associated to 4-simplices and matrices associated to Pachner 3-3 moves. This construction is inspired by \cite{KS13}.

Given 5 pairwise distinct variables $\zeta_i$ ($i=1,2,3,4$) and a 4-simplex 12345, we can construct a vector $(f_1,f_2,f_3,f_4,f_5)$ such that 
\bea
f_1\zeta_1^m+f_2\zeta_2^m+f_3\zeta_3^m+f_4\zeta_4^m+f_5\zeta_5^m=0\text{ for }m=0,1,2.
\eea
For example,
\bea
(f_1,f_2,f_3,f_4,f_5)=(f(1,2,3,4,5), f(2,1,3,4,5), f(3,1,2,4,5), f(4,1,2,3,5), f(5,1,2,3,4))
\eea
where
\bea
f(i,j,k,l,m) := \frac{4\zeta_i-\zeta_j-\zeta_k-\zeta_l-\zeta_m}{(\zeta_i-\zeta_j)(\zeta_i-\zeta_k)(\zeta_i-\zeta_l)(\zeta_i-\zeta_m)}.
\eea
Let $m$ be the matrix
\bea
\left(\begin{array}{cccccc} 0 & f(2,3,4,5,6) & f(3,2,4,5,6) & f(4,2,3,5,6) & f(5,2,3,4,6) & f(6,2,3,4,5) \\
                     f(1,3,4,5,6)& 0& f(3,1,4,5,6)& f(4,1,3,5,6)& f(5,1,3,4,6)& f(6,1,3,4,5) \\
                     f(1,2,4,5,6)& f(2,1,4,5,6)& 0& f(4,1,2,5,6)& f(5,1,2,4,6)& f(6,1,2,4,5) \\
                     f(1,2,3,5,6)& f(2,1,3,5,6)& f(3,1,2,5,6)& 0& f(5,1,2,3,6)& f(6,1,2,3,5) \\
                     f(1,2,3,4,6)& f(2,1,3,4,6)& f(3,1,2,4,6)& f(4,1,2,3,6)& 0& f(6,1,2,3,4) \\
                     f(1,2,3,4,5)& f(2,1,3,4,5)& f(3,1,2,4,5)& f(4,1,2,3,5)& f(5,1,2,3,4)& 0 
                    \end{array}\right),
\eea
then the rank of $m$ is 3.

For the polyhedron 123456, the Pachner 3-3 move replaces 3 of 6 4-simplices with the other 3, \ie the 4-simplices 12345, 12356, and 13456 with the 4-simplices 12346, 12456, and 23456.
We will construct a $3\times3$ matrix $P_{246,135}$ for this Pachner move such that
\bea
&&P_{246,135}\cdot\\
&&\left(\begin{array}{cccccc} 
f(1,2,3,4,5)& f(2,1,3,4,5)& f(3,1,2,4,5)& f(4,1,2,3,5)& f(5,1,2,3,4)& 0 \\
f(1,2,3,5,6)& f(2,1,3,5,6)& f(3,1,2,5,6)& 0& f(5,1,2,3,6)& f(6,1,2,3,5) \\
f(1,3,4,5,6)& 0& f(3,1,4,5,6)& f(4,1,3,5,6)& f(5,1,3,4,6)& f(6,1,3,4,5)                      
\end{array}\right)\nn\\
&=&\left(\begin{array}{cccccc} 
                     f(1,2,3,4,6)& f(2,1,3,4,6)& f(3,1,2,4,6)& f(4,1,2,3,6)& 0& f(6,1,2,3,4) \\
                     f(1,2,4,5,6)& f(2,1,4,5,6)& 0& f(4,1,2,5,6)& f(5,1,2,4,6)& f(6,1,2,4,5) \\
                     0 & f(2,3,4,5,6) & f(3,2,4,5,6) & f(4,2,3,5,6) & f(5,2,3,4,6) & f(6,2,3,4,5) 
\end{array}\right).\nn
\eea

We solve this matrix equation and get
\bea
P_{246,135}
&=&\left(\begin{array}{ccc}\frac{V(2,4,5)}{V(2,4,6)}&\frac{V(2,5,6)}{V(2,4,6)}&\frac{V(5,4,6)}{V(2,4,6)}\\
\frac{V(2,4,3)}{V(2,4,6)}&\frac{V(2,3,6)}{V(2,4,6)}&\frac{V(3,4,6)}{V(2,4,6)}\\
\frac{V(2,4,1)}{V(2,4,6)}&\frac{V(2,1,6)}{V(2,4,6)}&\frac{V(1,4,6)}{V(2,4,6)}
\end{array}\right)
\eea
where $V(i,j,k)=\begin{vmatrix}1&1&1\\\zeta_i&\zeta_j&\zeta_k\\\zeta_i^2&\zeta_j^2&\zeta_k^2\end{vmatrix}=(\zeta_i-\zeta_j)(\zeta_i-\zeta_k)(\zeta_j-\zeta_k)$
is the Vandermonde determinant. Note that the matrix $P_{246,135}$ is invertible and the sum of the entries of each row of the matrix $P_{246,135}$ is 1 because $\begin{vmatrix}1&1&1&1\\1&1&1&1\\\zeta_1&\zeta_2&\zeta_3&\zeta_4\\\zeta_1^2&\zeta_2^2&\zeta_3^2&\zeta_4^2\end{vmatrix}=V(2,3,4)-V(1,3,4)+V(1,2,4)-V(1,2,3)=0$. This gives a solution to the heptagon equation which is exactly a partial case of the one in \cite{DK21}.
A similar solution to the heptagon equation was given in \cite{K22a} and similar solutions to general odd-gon equations were given in \cite{K22b}.

\end{document}